\newtheorem{proposition}{Proposition}
\newtheorem{theorem}{Theorem}
\newtheorem{remark}{Remark}
\newcommand{\food}{\mbox{\tiny food}}
\newcommand{\air}{\mbox{\tiny air}}
\newcommand{\wall}{\mbox{\tiny wall}}
\newcommand{\evap}{\mbox{\tiny evap}}
\newcommand{\amb}{\mbox{\tiny amb}}
\newcommand{\suc}{\mbox{\tiny suc}}
\newcommand{\meas}{\mbox{\tiny meas}}
\numberwithin{equation}{section}
\newcounter{labelnote}
\let\oldmarginnote\marginnote
\renewcommand*{\marginnote}[1]{%
 \begingroup\strut
  \stepcounter{labelnote}\zsaveposx {marginnote-\thelabelnote}
     \ifnum 0\zposx{marginnote-\thelabelnote}<19000000
      \reversemarginpar
      \oldmarginnote{\color{blue}#1}%
     \else
      \normalmarginpar
      \oldmarginnote{\color{blue}#1}%
     \fi
 \endgroup%
}
\title{Online Combinatorial Optimization for Interconnected Refrigeration Systems: Linear Approximation and Submodularity
 } 
\author{
 Insoon Yang\thanks{Ming Hsieh Department of Electrical Engineering, University of Southern California ({insoonya@usc.edu}). Supported in part by NSF under CRII:CPS (CNS1657100).}
}
\date{}
\begin{document}
\maketitle


\pagestyle{myheadings}
\thispagestyle{plain}

\begin{abstract}
Commercial refrigeration systems consume 7\% of the total commercial energy consumption in the United States. 
Improving their energy efficiency contributes to
the sustainability of global energy systems and the supermarket business sector.
This paper proposes a new control method that can save the energy consumption of multi-case supermarket refrigerators by explicitly taking into account their interconnected and switched system dynamics.
Its novelty is a bilevel combinatorial optimization formulation to generate ON/OFF control actions for expansion valves and compressors.
The inner optimization module keeps display case temperatures in a desirable range and the outer optimization module minimizes energy consumption.
In addition to its energy-saving capability, the proposed controller significantly reduces the frequency of compressor switchings by employing a conservative compressor control strategy.
However, solving this bilevel optimization problem associated with interconnected and switched systems is a computationally challenging task.
To solve the problem in near real time, we propose two approximation algorithms that can solve both the inner and outer optimization problems at once.
The first algorithm uses a linear approximation, and the second is based on the submodular structure of the optimization problem.
Both are (polynomial-time) scalable algorithms and generate near-optimal solutions with performance guarantees. 
Our work complements  existing optimization-based control methods (e.g., MPC) for supermarket refrigerators, as our algorithms can be adopted as a tool for solving combinatorial optimization problems arising in these methods.
\end{abstract}

\begin{keywords}
Control systems, Refrigerators, Temperature control,  Optimization,  
Integer linear programming, Greedy algorithms, Scalability
\end{keywords}

\section{Introduction}

Commercial refrigeration systems account for 7\% of the total commercial energy consumption in the United States \cite{DOE2012}.
Therefore, there is a strong need for energy-efficient refrigeration systems, but research and development have focused on improving hardware rather than software, including control systems.
Traditionally, hysteresis and set point-based controllers have been used to maintain the display case temperature in a desirable range without considering system dynamics and energy consumption.
Over the past decade, however, more advanced control systems have been developed to save energy consumption
using real-time sensor measurements and optimization algorithms (see Section \ref{lit}).
Advances in new technologies, such as the Internet of Things and cyber-physical systems, enhance the practicality of 
such an advanced control system with their sensing, communication, and computing capabilities \cite{Graziano2014}.

Supermarkets are one of the most important commercial sectors in which energy-efficient refrigeration systems are needed.
The primary reasons are twofold.  First, supermarket refrigerators consume 56\% of energy consumed by commercial refrigeration systems \cite{Navigant2009}.
Second, supermarkets  operate with very thin profit margins (on the order of 1\%), and energy savings thus significantly help their business: the U.S. Environmental Protection Agency estimates that reducing energy costs by \$1 is equivalent to increasing sales by \$59 \cite{ES2008}.
However, improving the energy efficiency of supermarket refrigerators is a challenging task because food products must be stored at proper temperatures.
Failure to do so will increase food safety risks.
The most popular refrigerators in supermarkets are multi-display case units. 
An example is illustrated in 
Fig. \ref{fig:diagram_int}.
Each display case has an evaporator controlled by an expansion valve, and a unit's suction pressure is controlled by a compressor rack, as shown in Fig. \ref{fig:diagram}.
In typical supermarket refrigerators,  controllers turn ON and OFF expansion valves and compressors to keep display case temperatures in a specific range.
Importantly, there are heat transfers between display cases due to the interconnection among them.
Note that traditional hysteresis or set point-based controllers do not take into account such heat transfers and therefore perform in a suboptimal way.

\begin{figure}[tb] 
\begin{center}
\includegraphics[width =3.3in]{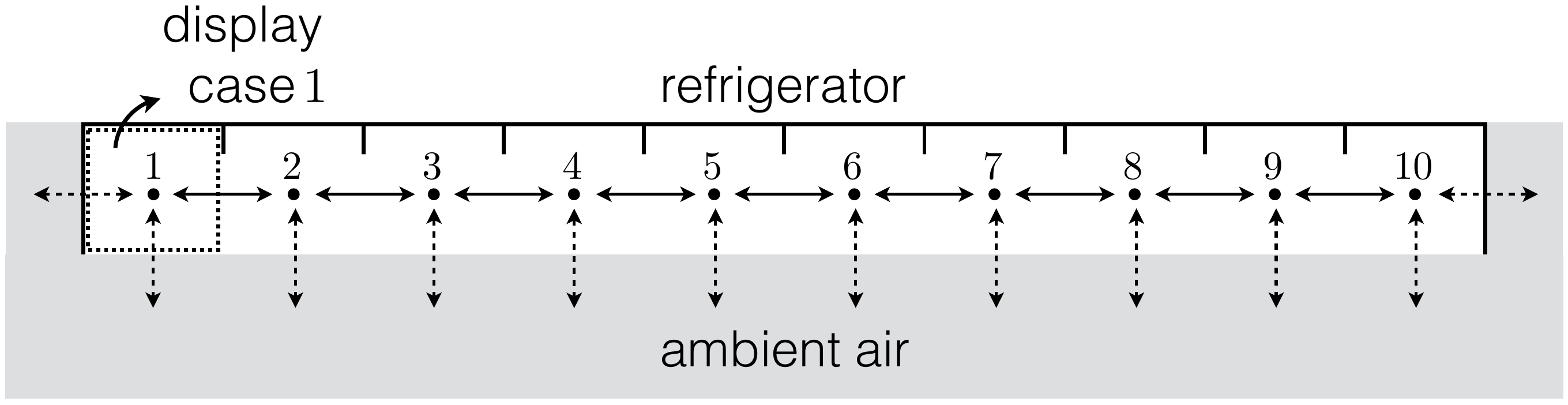}
\caption{A supermarket refrigerator, which has 10 display cases.
Evaporator $i$ controls the temperature of display case $i$.
Lines with arrows represent heat transfers between neighboring display cases or between a display case and ambient air.}
 \label{fig:diagram_int}
 \end{center}
\end{figure}

This paper proposes a new control method that can improve the energy efficiency of multi-case supermarket refrigerators by explicitly taking into account the interconnected and switched dynamics of display case temperatures. 
The proposed controller receives sensor measurements and optimizes ON/OFF control actions for expansion valves and compressors in near real time. 
The novelty of this work is a bilevel combinatorial optimization formulation to generate such ON/OFF control signals in which $(i)$ the inner combinatorial optimization module 
is responsible for maintaining display case temperatures in a desirable range, and $(ii)$ the outer combinatorial optimization module minimizes energy consumption.
The primary advantage of the proposed approach is its energy savings. 
Because the controller explicitly takes into account the system dynamics and heat transfers,
it effectively uses state measurements and optimizes control actions to save energy while guaranteeing desired temperature profiles.
In our case studies, the proposed control method saves 7.5--8$\%$ of energy compared to a traditional approach.
The secondary benefit of the proposed method is to reduce the frequency of compressor switchings.
It is known that frequent switchings of compressors accelerate their mechanical wear. 
We propose a conservative compressor control approach that reduces fluctuations in suction pressure and thus decreases the compressor switching frequency.  In our case studies using a benchmark refrigeration system model, the proposed method reduces the switching frequency by 54--71.6$\%$.

\begin{figure}[tb] 
\begin{center}
\includegraphics[width =2.7in]{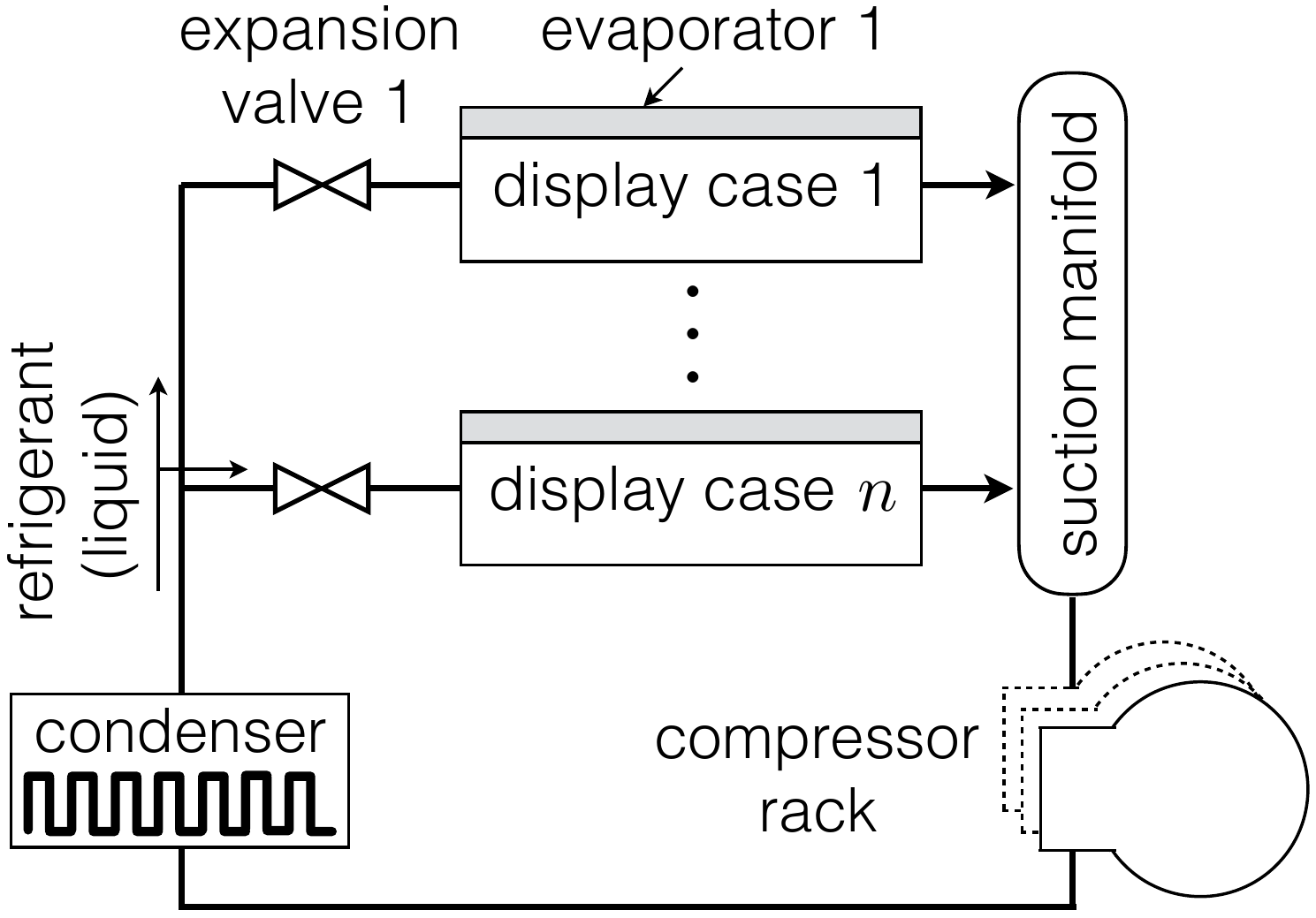}
\caption{Schematic diagram of a supermarket refrigerator.}
 \label{fig:diagram}
 \end{center}
\end{figure}

The proposed control method, however, presents a theoretical and algorithmic challenge 
because a bilevel combinatorial optimization associated with a dynamical system must be solved in near real time. 
To overcome this challenge, we suggest  two approximation algorithms that can solve both of the inner and outer optimization problems at once. 
The first algorithm uses the linear approximation method developed in our previous work \cite{Yang2016}.
The approximate problem is a linear binary program, which can be solved by an efficient and scalable single-pass algorithm.
In addition, it simulates the dynamical system model only once to generate control actions at each time point. 
We also show that the approximate solution obtained by this method has a provable suboptimality bound.
The second algorithm is based on the \emph{submodular} structure in the optimization problem.
The inner optimization's objective function is submodular because opening an expansion valve when a smaller set of valves are opened gives a greater marginal benefit than opening it when a larger set of valves are already opened.
We prove this intuitive submodularity property.  Therefore, a greedy algorithm can be adopted to obtain a $(1-\frac{1}{e})$-optimal solution \cite{Nemhauser1978}.
In our case studies, the actual performance of the proposed controller using these two algorithms is $98.9$--$99.5$\% of the optimal controller.

\subsection{Related Work}\label{lit}

Several optimization-based control methods for commercial refrigerators have been developed over the past decade. 
One of the most popular methods is model predictive control (MPC) although it is computationally challenging to apply standard MPC due to the  switched  dynamics of refrigeration systems.
It is shown that the mixed logical dynamical framework is useful to solve 
small-size problems with a piecewise affine approximation of a system model \cite{Bemporad1999, Larsen2005}.
However, the practicality of this method is questionable due to the high dimensionality of practical problems for supermarket refrigerators, except for limited cases.
To overcome this limitation, \cite{Sarabia2009} carefully selects and parametrizes optimization variables to formulate the problem as nonlinear MPC instead of hybrid MPC.
Nonetheless this approach is computationally expensive because a nonlinear program with many variables must be solved in each MPC iteration.
An alternative approach using hierarchical MPC is proposed in \cite{Sonntag2008}.
This method separates time scales into two: in every  nonlinear MPC iteration, low-level temperature controllers were employed, and the high-level optimization task is to determine optimal parameters for these controllers.
However, this approach still presents the combinatorial growth of the search space.
More recently, a sequential convex programming-based method is shown to be computationally efficient in several case studies \cite{Hovgaard2013}.
It iteratively solves an optimization problem using convex programming, replacing the nonconvex cost function with a convex approximation.
In several numerical experiments, this heuristic method generates high-quality control signals although it gives no theoretical performance guarantee.
We believe that our work is complementary to the aforementioned methods. 
One of our main contributions is to develop two efficient and scalable algorithms for resolving the computational challenge in discrete optimization problems associated with supermarket refrigeration systems.
These algorithms can be adopted as a tool for solving combinatorial optimization problems in the aforementioned methods.
We propose one of the most efficient control architectures that use the algorithms. 

With  advances in modern power systems, 
an important emerging application is using
 supermarket refrigeration systems for thermal storage \cite{Pedersen2013, Shafiei2013, Vinther2015, Minko2015}.
In this application, it is often necessary to regulate the total power consumption of hundreds of refrigerators to provide reliable demand response services to the power grid. 
Our work contributes to such demand response applications by providing scalable online optimization algorithms with performance guarantees that are useful for solving large-scale problems.
The utility of the proposed algorithms in demand response is demonstrated in \cite{Yang2015} and a case study is presented in Section \ref{case}.

\subsection{Outline}

The remainder of this paper is organized as follows.
In Section \ref{ref_model}, we describe an interconnected hybrid system model of supermarket refrigerators and provide a simulation result when a traditional set point- and PI-based controller is employed.
We then propose the proposed control method based on bilevel online combinatorial optimization in Section \ref{control}.
In Section \ref{app_alg}, we provide two efficient algorithms to solve the combinatorial optimization problem near real time and examine their scalability.
In Section \ref{case}, we compare the performance of the proposed controllers with that of the traditional controller and demonstrate its utility in automated demand response.

\section{Switched and Interconnected Dynamics of Supermarket Refrigeration Systems}\label{ref_model}

We consider a supermarket refrigerator in which multiple display cases are interconnected with one another.
For example, Fig. \ref{fig:diagram_int} shows a refrigerator that has 10 display cases. 
The temperature of each display case is controlled by an evaporator unit, where the refrigerant evaporates absorbing heat from the display case.
Let evaporator $i$ be in charge of display case $i$ for $i = 1, \cdots, n$, where $n$ is the number of display cases in all the refrigerators.
Several dynamic models of supermarket refrigeration systems 
have been proposed 
\cite{Rasmussen2006, Larsen2007, Li2010, Rasmussen2012, Rasmussen2012b, Shafiei2013b} (see also the references therein). 
Among those, we use the benchmark model of a typical supermarket refrigeration system proposed in \cite{Larsen2007} and 
 widely used in \cite{Sarabia2009, Sonntag2008, Yang2011, Vinther2015, Minko2015}.
This model is useful for simulating display case temperatures and evaluating the performances of several controllers.

\subsection{Display Cases and Evaporators}

Display cases store food products and keep them refrigerated.
This refrigeration is due to the heat transfer between the food product and the cold air in the display cases.
Let $T_{\food, i}$ and $T_{\air,i}$ denote the temperatures of the food product and the air in display case $i$.
The heat transfer $Q_{\food \to \air, i}$ between the food product and the air in display case $i$ can then be modeled as
\begin{equation}\label{food_dynamics}
\begin{split}
m_{\food, i} c_{\food, i} \dot{T}_{\food, i} &= - Q_{\food \to \air, i}\\
&=-k_{\food-\air} (T_{\food,i} - T_{\air,i}),
\end{split}
\end{equation}
where $m_{\food, i}$ is the mass of the food product, $c_{\food,i}$ is the heat capacity of the food product and $k_{\food-\air}$ is the heat transfer coefficient between the food product and the air.

The display case air temperature is affected by the heat transfers from the food product $(Q_{\food \to \air, i})$, the ambient air $(Q_{\amb \to \air, i})$, the evaporator  $(-Q_{\air \to \evap, i})$ and the neighboring display case air $(\sum_{j = 1}^n Q_{j \to i})$. 
The refrigerant flow into an evaporator is controlled by its expansion valve.  
Let $u_i$ be the valve control variable for evaporator $i$ such that
\begin{equation} \nonumber
u_i(t) := \left \{
\begin{array}{ll}
0 &\mbox{if expansion valve $i$ is closed at $t$}\\
1 &\mbox{otherwise}.
\end{array}
\right.
\end{equation}
Expansion valve $i$ controls the refrigerant injection into evaporator $i$
and decreases the pressure of the refrigerant if it is open, as shown in Fig. \ref{fig:diagram}.
Then, the dynamics of the display case air temperature can be modeled as
the following switched interconnected system:
\begin{equation}\label{air_dynamics}
\begin{split}
&m_{\air,i}c_{\air, i} \dot{T}_{\air, i} \\
&= Q_{\food \to \air, i} + Q_{\amb \to \air, i}
- Q_{\air \to \wall, i} + \sum_{j = 1}^n Q_{j \to i}\\
&=k_{\food-\air}(T_{\food, i} - T_{\air, i}) + k_{\amb - \air} ( T_{\amb} - T_{\air, i})\\
&- k_{\air-\evap} (T_{\air,i} - T_{\evap} u_i) + \sum_{j=1}^n k_{i,j} (T_{\air, j} - T_{\air,i}),
\end{split}
\end{equation}
where $T_{\amb}$ is the ambient air temperature, $T_{\evap}$ is the refrigerant's evaporation temperature,
$k_{\amb-\air}$ is the heat transfer coefficient between the ambient air and the display case air and $k_{i,j}$ is the heat transfer coefficient between display case~$i$'s air and display case $j$'s air.
Note that $k_{i,j} = 0$ if display cases~$i$ and $j$ are not neighbors.
For a more detailed model, one can separately consider the dynamics of the evaporator wall temperature \cite{Sarabia2009}.\footnote{Alternatively, one can introduce a delay parameter, $\tau$, and replace $T_{\evap}u_i(t)$ with $T_{\evap} u_i(t-\tau)$ to explicitly take into account the effect of the evaporator wall temperature.} 
However, the proposed model is a good approximation because the heat transfer coefficient between the evaporator wall and the refrigerant is five to ten times higher than other heat transfer coefficients \cite{Larsen2007}.

The mass flow out of the evaporator can be computed as
\begin{equation}\nonumber
\begin{split}
f_i &:= \frac{1}{\Delta t} m_{\mbox{\tiny ref}, i},
\end{split}
\end{equation}
where the refrigerant mass in the evaporator is controlled by the valve switching
\begin{equation}\nonumber
{m}_{\mbox{\tiny ref}, i} = \left \{
\begin{array}{ll}
m_{\mbox{\tiny ref}}^{\max} & \mbox{if $u_{i} = 1$}\\
0 & \mbox{if $u_{i} = 0$}.
\end{array}
\right.
\end{equation}
Depending on the specification of refrigerators, 
it takes a nontrivial amount of time to fill up the evaporator by refrigerant.
In this case, the dynamics of the refrigerant mass in the evaporator can  be explicitly taken into account \cite{Sarabia2009}. Alternatively, one can introduce a delay-time constant, $\tau$, and let $m_{\mbox{\tiny ref}, i}(t) = m_{\mbox{\tiny ref}}^{\max} u_i(t-\tau)$ to model the effect of the time to fill up the evaporator.

\subsection{Suction Manifold and Compressor Rack}

As shown in Fig. \ref{fig:diagram}, 
the evaporated refrigerant with low pressure from the outlet of the evaporator is compressed by the electric motors in the compressor bank.
Each refrigerator could have multiple compressors and each compressor is switched ON or OFF. 
For example, all the compressors are turned ON when maximal compression is needed.
The compressor bank is conventionally controlled by a PI controller to maintain the suction pressure within a bandwidth.

The suction manifold pressure $P_{\suc}$ evolves with the following dynamics:
\begin{equation}\label{p_dy}
\dot{P}_{\suc} = \frac{1}{V_{\suc} r_{\suc}} \left (
\sum_{i=1}^n f_i - \rho_{\suc} \sum_{i=1}^{n_c} F_{c,i}
\right ),
\end{equation}
where $V_{\suc}$ is the volume of the suction manifold,
 $\rho_{\suc}$ is the density of the refrigerant in the suction manifold, and 
 $r_{\suc} := d\rho_{\suc}/dP_{\suc}$.
The variable $F_{c,i}$ denotes
the volume flow out of the suction manifold controlled by compressor $i$.
Let $u_{c,i}$ be the control variable for compressor $i$, where $u_{c,i} = 0$ represents that compressor $i$ is OFF and $u_{c,i} = 1$ represents that compressor $i$ is ON.  The volume flow $F_{c,i}$ is then given by
\begin{equation}\nonumber
F_{c,i} = k_{c} u_{c,i} := \frac{\eta V_{\mbox{\tiny comp}}}{n} u_{c,i},
\end{equation}
where $\eta$ is the volumetric efficiency of each compressor, and $V_{\mbox{\tiny comp}}$  denotes the compressor volume. 

The total power consumption by the compressor rack is given by
\begin{equation}\nonumber
p = \rho_{\suc} (h_{oc} - h_{ic}) \sum_{i=1}^{n_c} F_{c,i},
\end{equation}
where $h_{ic}$ and $h_{oc}$ are the enthalpies of the refrigerant flowing 
 into and out of
 the compressor, respectively.
 The compressed refrigerant  flows to the condenser and is liquefied by generating heat,  as shown in Fig. \ref{fig:diagram}.
The liquefied refrigerant flows to the expansion valve, and as a result, the refrigeration circuit is closed.

\subsection{Traditional Set-Point/PI-Based Control}\label{convention}

A widely used control method consists of $(i)$ a set-point based control of expansion valves, and $(ii)$ a PI control of compressors \cite{Larsen2007, Sarabia2009}.
Specifically, the following ON/OFF control law is traditionally used for expansion valve $i$:
\begin{equation} \nonumber
u_i (t) := \left \{
\begin{array}{ll}
1 & \mbox{if } T_{\air, i} (t) >T_i^{\max} \\
0 & \mbox{if } T_{\air, i} (t) <T_i^{\min} \\
u_i(t^-) & \mbox{otherwise},
\end{array}
\right.
\end{equation}
where $[T_i^{\min}, T_i^{\max}]$ is the desirable temperature range for display case $i$. 
To control the suction pressure, compressors are traditionally operated by a PI controller. 
This controller tracks the error $e(t)$ that measures the deviation of the suction pressure from the reference $\bar{P}_{\suc}$ over the dead band $DB$, i.e.,
\begin{equation}\nonumber
e(t) := \left \{
\begin{array} {ll}
P_{\suc} (t) - \bar{P}_{\suc} &\mbox{if } |e(t)| > DB\\
0 & \mbox{otherwise}.
\end{array}
\right.
\end{equation}
Then, the number of ON compressors is determined by a thresholding rule
depending on the following output of the PI controller
 \begin{equation}\nonumber
u_{PI} (t) = K_P e(t) +\frac{1}{K_I} \int e(t) dt;
\end{equation}
the greater value the output generates, the more compressors the controller turns on. More details about the thresholding rule can be found in \cite{Larsen2007, Sarabia2009}.

In our case studies,  R134a is chosen as the refrigerant.  Its relevant thermodynamic properties are contained in \cite{Larsen2007}.  For convenience, we summarize the properties as follows:  
$T_{\evap} = -4.3544P_{\suc}^2 + 29.2240 P_{\suc} - 51.2005$,
$\Delta h = (0.0217 P_{\suc}^2 - 0.1704 P_{\suc} + 2.2988) \times 10^5$, 
$\rho_{\suc} = 4.6073 P_{\suc} + 0.3798$,
$r_{\suc} = -0.0329 P_{\suc}^3 + 0.2161P_{\suc}^2 - 0.4742 P_{\suc} + 5.4817$,
$\rho_{\suc} (h_{oc} - h_{ic}) = (0.0265 P_{\suc}^3 - 0.4346 P_{\suc}^2 + 2.4923 P_{\suc} + 1.2189 ) \times 10^5$.
These formulas were obtained by fitting polynomials with experimental data.
The additional parameters used in the simulations are summarized in Table \ref{table:parameters}.

\begin{center} 
    \captionof{table}{Parameters used in simulations} \label{table:parameters}
    \small
  \begin{tabular}{   c c  || c c} 
    \hline
    Parameter & Value & Parameter & Value\\ \hline\hline
    $\bar{m}_{\food, i}$ & 200 kg & $c_{\food, i}$ & 1000 J/(kg$\cdot$K)  \\ \hline
    $m_{\wall, i}$ &  260 kg  & $c_{\wall, i}$ &  385 J/(kg$\cdot$K) \\ \hline
$m_{\air, i}$ &  50 kg  & $c_{\air, i}$ & 1000 J/(kg$\cdot$K) \\ \hline
$k_{\food-\air}$ &  300 J/(s$\cdot$K)  & $k_{\air-\evap}$ &  225 J/(s$\cdot$K)  \\ \hline
$k_{i,j}$  & 500 J/(s$\cdot$K)  & ${k}_{\amb-\evap}$ &  275 J/(s$\cdot$K)  \\ 
\hline
$m_{\mbox{\tiny ref}}^{\max}$ & 1 kg & $V_{\suc}$ & 10 m$^3$\\ \hline
 $\eta $ & 0.81 & $V_{\mbox{\tiny comp}}$ & 0.2 m$^3$/s\\ \hline
 $n$ &10  & $T_{\amb}$ & 20$^\circ$C\\ \hline
 ${T}_{i}^{\min} $ & 0$^\circ$C & ${T}_{i}^{\max}$ & 5$^\circ$C \\ \hline
  $K_P$ & $0.1$ & $K_I$ & $-0.8$\\ \hline
   $\bar{P}_{\suc}$ & 1.4 bar & $DB$ & 0.3 bar \\ \hline
  \end{tabular}
    \end{center}

We perturbed the mass of food products in each display case by $\pm 20\%$ from the nominal value $\bar{m}_{\food, i}$.
Despite this heterogeneity, the set point-based controller almost identically turns ON and OFF
all the expansion valves and therefore all the display case temperatures have almost the same trajectory as shown in Fig. \ref{fig:PI} (a).
This synchronization is due to the decentralized nature of the set point-based controller: the control decision for expansion valve $i$ depends only on its local temperature. 
Intuitively, this decentralized controller is suboptimal because it does not actively take into account the heat transfer between neighboring display cases.  This inefficiency of the traditional control approach motivates us to develop a new optimization-based control method that explicitly considers the interdependency of display case temperature dynamics.

Another disadvantage resulting from the synchronization of expansion valves is the significant fluctuation of suction pressure. Since the PI controller integrates the deviation of suction pressure from its reference, the output $u_{PI} (t)$ presents large and frequent variations.  
As a result, the number of ON compressors frequently varies as shown in 
Fig. \ref{fig:PI} (b). 
A frequent switching of compressors is a serious problem because it accelerates the mechanical degradation of the compressors.
Our strategy to discourage frequent compressor switchings is twofold:
$(i)$ 
our conservative compressor control method tries to maintain $P_{\suc} (t) = \bar{P}_{\suc}$, 
not fully utilizing the pressure bandwidth $\pm DB$, and 
$(ii)$ our online optimization-based controller 
indirectly desynchronizes the ON/OFF operation of expansion valves.
The details about the two control methods are presented in the following sections.
Unlike the traditional control approach, our proposed method 
is suitable for regulating the total power consumption in real time.
This feature is
 ideal for modern power system (so-called `smart grid') applications,
 allowing supermarket units to follow a real-time regulation signal 
 for reducing peak demand or supporting a spinning reserve.
Such applications of our control method to power systems are studied in \cite{Yang2015} and one of which is studied in Section \ref{dr}.

\begin{figure}
\centering
  \includegraphics[width=3.5in]{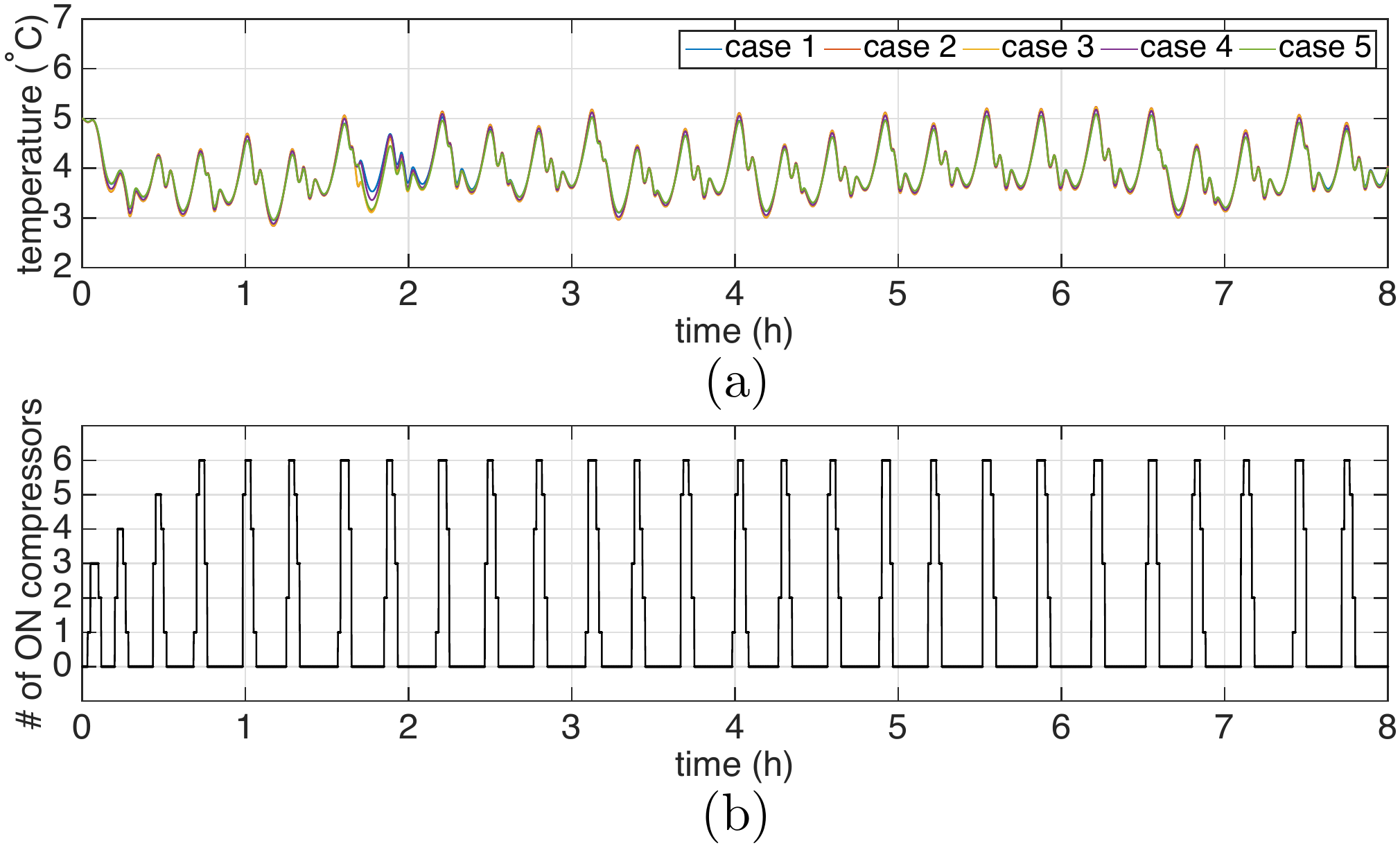}
  \caption{(a) The food temperatures in display cases 1--5, operated by the PI controller over 8 hours. The temperatures in all display cases are almost identical.
    (b) The number of ON compressors operated by the PI controller. Note that the profile presents frequent fluctuations.}
   \label{fig:PI}
\end{figure}

\section{Control via Online Combinatorial Optimization} \label{control}

\subsection{Conservative Compressor Control}\label{conservative}

We control the compressor rack to (approximately) maintain the suction pressure as the reference $\bar{P}_{\suc}$, i.e.,
\begin{equation}\nonumber
P_{\suc}(t) \approx \bar{P}_{\suc} \quad \forall t.
\end{equation}
 In other words, to make $\dot{P}_{\suc} \equiv 0$ in \eqref{p_dy},
 we set $u_c := (u_{c,1}, \cdots, u_{c,n_c})$ such that the refrigerant outflow from the suction manifold is equal to the inflow:
\begin{equation}\label{conservative}
\rho_{\suc} \sum_{i=1}^{n_c} k_{c} u_{c,i} \approx \sum_{i=1}^n f_i.
\end{equation}
In practice, we may not be able to exactly satisfy this equality because each $u_{c,i}$ is either $0$ or $1$. However, we assume that the compressor control action $u_{c}$ can be chosen to make  the difference between the outflow and the inflow negligible.  
This compressor control rule is suboptimal: it induces a conservative operation of the compressor rack that does not fully utilize the pressure bandwidth. 
However, this conservative control approach has a practical advantage: it does not create significant compressor switchings.
Therefore, it can potentially decelerate the mechanical wear of compressors. 
Under this compressor control rule, the total power consumption can be computed as
\begin{equation}\label{p_relation}
\begin{split}
p &= (h_{oc} - h_{ic}) \sum_{i=1}^n f_i \\
&=  \frac{(h_{oc} - h_{ic})m_{\mbox{\tiny ref}}^{\max}}{\Delta t}  \sum_{i=1}^n  u_i.
\end{split}
\end{equation}

\subsection{Bilevel Optimization Formulation}

We consider a receding-horizon online optimization approach to generate control signals for expansion valves and compressors.
Let $\{t_0, t_1, \cdots, t_k, t_{k+1}, \cdots\}$ be the time steps at which the control action is optimized.
For the sake of simplicity, we describe a one-step look-ahead optimization method; however, this approach can be easily extended to multiple-step look-ahead optimization (see Remark \ref{multi}).

\subsubsection{Inner problem for temperature management}
At time $t_k$, we control the expansion valves to minimize the following quadratic deviation
from the upper-bound $T_i^{\max}$, $i=1, \cdots, n$: 
\begin{equation}\nonumber
\begin{split}
J(\alpha) =\sum_{i=1}^n \int_{t_k}^{t_{k+1}}  (T_{\air, i} - {T}_{i}^{\max})_+^2 dt,
\end{split}
\end{equation}
where 
$(a)^2_{+} = a^2 \cdot \bold{1}_{\{a \geq 0\}}$,
assuming $T_{\air}$ is evolving with \eqref{food_dynamics} and \eqref{air_dynamics}.
Specifically, the expansion valve action at $t_k$ is generated as a solution to the following combinatorial optimization problem:
\begin{subequations}\label{opt_evap}
\begin{align}
\min_{\alpha \in \{0,1\}^n} \quad & 
J(\alpha)\\
\mbox{s.t.} \quad 
& \dot{x} = A x + Bu +  C, \quad x(t_k) = \bold{x}_{\meas} \label{sys}\\
& u(t) = \alpha, \quad t \in (t_k, t_{k+1}] \label{fix}\\
&\|\alpha \|_0 = \sum_{i=1}^n \alpha_i \leq K.  \label{power}
\end{align}
\end{subequations}
Here, $x := (T_{\food}, T_{\air})$ and \eqref{sys} gives a linear system representation of the dynamics \eqref{food_dynamics} and \eqref{air_dynamics}.
Note that $\bold{x}_{\meas}$ represents $(T_{\food}, T_{\air})$ measured at $t = t_k$.\footnote{If $T_{\food}$ is not directly measured, an observer needs to be employed to estimate the state.  Then, the control system uses the estimate $T_{\food}^{est}$ instead of its actual measurement as shown in Fig. \ref{fig:control}.} 
As specified in \eqref{fix}, the control action over $(t_k, t_{k+1}]$ is fixed as the solution $\alpha$.
The last constraint \eqref{power} essentially limits the power consumed by the refrigeration system as $K(h_{oc} - h_{ic})m_{\mbox{\tiny ref}}^{\max}/\Delta t$ due to \eqref{p_relation}. 
Therefore, the choice of $K$ is important to save energy: as $K$ decreases, the power consumption lessens.

\subsubsection{Outer problem for energy efficiency}

To generate an energy-saving control action, we minimize the number $K$ of open expansion valves while guaranteeing that the quadratic deviation $J(\alpha)$ from the upper-bound $T_i^{\max}$, $i=1, \cdots, n$ is bounded by the threshold $\Delta$. More precisely, we consider the following outer optimization problem:
\begin{equation} \label{outer}
\min \{ K \in \{0, \cdots, n\} \: | \: 
J(\alpha^{opt}(K)) \leq \Delta \},
\end{equation}
where $\alpha^{opt} (K)$ is a solution to the expansion valve optimization problem \eqref{opt_evap}.
Let $K^{opt}$ be a solution to this problem.
Then, $\alpha^{opt}(K^{opt})$ is the expansion valve control action that saves energy the most while limiting the violation of the food temperature upper-bound $T_i^{\max}$, $i=1, \cdots, n$.
This outer optimization problem can be easily solved by searching $K$ from $0$ in an increasing order. Once we find $\hat{K}$ such that $J(\alpha^{opt}(\hat{K})) \leq \Delta$, we terminate the search and obtain the solution as $K^{opt} := \hat{K}$.
In the following section, we will show that this procedure can be integrated into approximation algorithms for the inner optimization problem.

Then, as specified in \eqref{fix}, the controller chooses $u^{opt}(t) := \alpha^{opt}(K^{opt})$ for $t \in (t_k, t_{k+1}]$. Furthermore, it determines the compressor control signal $u_c^{opt}$ such that
 $\sum_{i=1}^{n_c} u_{c,i}^{opt} \approx m_{\mbox{\tiny ref}}^{\max}/ (\rho_{\suc} k_c \Delta t )$ using \eqref{conservative}.
 If $P_{\suc} (t) < \bar{P}_{\suc}$, the controller rounds 
 $m_{\mbox{\tiny ref}}^{\max}/ (\rho_{\suc} k_c \Delta t )$ to the next smaller integer and then determines the number of ON compressors as the integer.
If $P_{\suc} (t) \geq \bar{P}_{\suc}$, the controller rounds $m_{\mbox{\tiny ref}}^{\max}/ (\rho_{\suc} k_c \Delta t )$ to the nearest integer greater than or equal to it.
 The information flow in this control system is illustrated in Fig. \ref{fig:control}.

\begin{figure}
\begin{center}
  \includegraphics[width=2.3in]{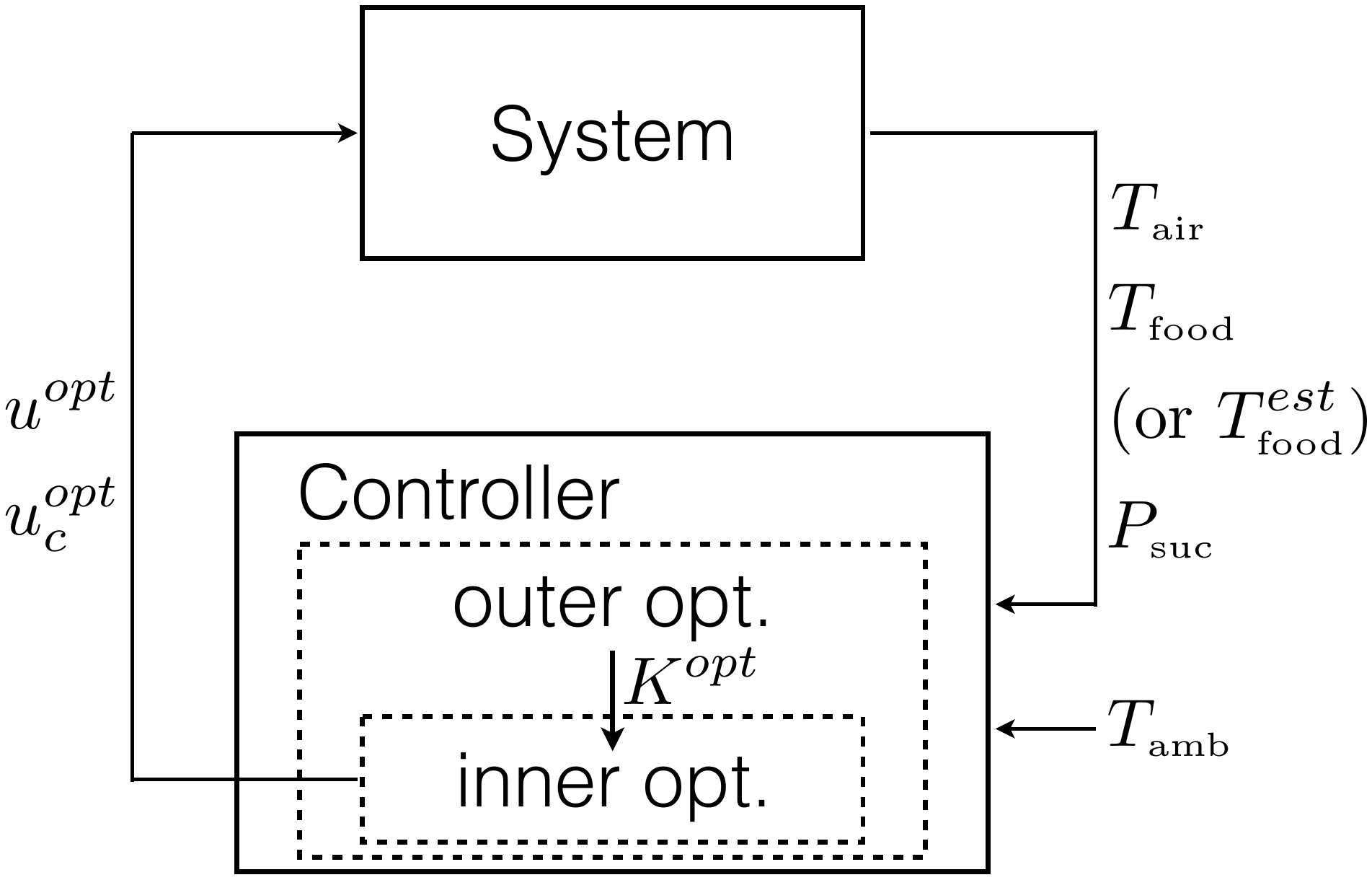}
  \caption{The proposed control system with the outer and inner optimization modules.}
   \label{fig:control}
   \end{center}
\end{figure}

\begin{remark}
Our objective function $J(\alpha)$ only takes into account the violation of temperature upper-bounds. 
This choice is motivated by the fact that the food temperature in each display case increases as we close more expansion valves, which is summarized in Proposition \ref{monotone}.
In other words, as we reduce the number $K$ of open valves in the outer optimization problem, 
 the possibility of violating temperature upper-bounds increases, while 
it is less likely to violate temperature lower-bounds.
This monotonicity property of food temperatures justifies our focus on temperature upper-bounds.
\end{remark}

\begin{proposition} \label{monotone}
Let $T_{\food, j}^\alpha$ and $T_{\air, j}^\alpha$ denote the food and air temperatures in display case $j$ when the control action $\alpha$ is applied.
Then, for any $\alpha, \beta \in \mathbb{R}^n$ such that
\begin{equation}\nonumber
\alpha_i \leq \beta_i, \quad i=1, \cdots, n,
\end{equation}
we have
\begin{equation}\nonumber
T_{\food, j}^\alpha \geq T_{\food, j}^\beta \mbox{ and }T_{\air, j}^\alpha \geq T_{\air, j}^\beta, \quad j=1, \cdots, n.
\end{equation}
\end{proposition}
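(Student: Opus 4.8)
The plan is to exploit the fact that the linear model \eqref{sys} underlying the two trajectories is a \emph{cooperative} (positive) system, so that an ordering of the inputs is reflected, with reversed sign, in the states. First I would write both trajectories as solutions of the same linear ODE $\dot{x} = Ax + Bu + C$ with the common initial condition $x(t_k) = \mathbf{x}_{\meas}$, one driven by the constant input $u \equiv \alpha$ and the other by $u \equiv \beta$ (constant over the interval, as prescribed by \eqref{fix}). Setting $z := x^\alpha - x^\beta$ and subtracting the two equations, the affine term $C$ and the initial condition cancel, leaving
\[
\dot{z} = A z + B(\alpha - \beta), \qquad z(t_k) = 0.
\]
Since $z = (T_{\food}^\alpha - T_{\food}^\beta,\, T_{\air}^\alpha - T_{\air}^\beta)$, the conclusion is exactly the componentwise inequality $z(t) \geq 0$ for all $t \geq t_k$.

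Next I would verify the two sign conditions that turn this into a positive-systems problem. Reading off \eqref{food_dynamics} and \eqref{air_dynamics}, the only off-diagonal couplings in $A$ are $T_{\air,i}\!\to\!\dot{T}_{\food,i}$ with coefficient $k_{\food-\air}/(m_{\food,i}c_{\food,i})\geq 0$, the symmetric coupling $T_{\food,i}\!\to\!\dot{T}_{\air,i}$ with coefficient $k_{\food-\air}/(m_{\air,i}c_{\air,i})\geq 0$, and the inter-case coupling $T_{\air,j}\!\to\!\dot{T}_{\air,i}$ with coefficient $k_{i,j}/(m_{\air,i}c_{\air,i})\geq 0$. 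Hence every off-diagonal entry of $A$ is nonnegative, i.e. $A$ is Metzler. The input enters only the air equations, through the term $k_{\air-\evap}T_{\evap}u_i$, so the nonzero entries of $B$ equal $k_{\air-\evap}T_{\evap}/(m_{\air,i}c_{\air,i})$; because the refrigerant evaporation temperature satisfies $T_{\evap} < 0$, these entries are nonpositive, whence $B \leq 0$ entrywise. Combined with the hypothesis $\alpha_i \leq \beta_i$, i.e. $\alpha - \beta \leq 0$, this yields $B(\alpha-\beta) \geq 0$ componentwise.

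With these facts in hand I would close the argument via the variation-of-constants formula. Since $z(t_k)=0$ and the forcing $d := B(\alpha-\beta)$ is constant,
\[
z(t) = \int_{t_k}^{t} e^{A(t-s)} d \, ds, \qquad t \geq t_k.
\]
The standard property of Metzler matrices, namely that $e^{A\tau}$ has nonnegative entries for every $\tau \geq 0$ (write $e^{A\tau} = e^{-c\tau} e^{(A+cI)\tau}$ with $c$ large enough that $A+cI \geq 0$ entrywise, and expand the second factor as a nonnegative power series), makes the integrand the product of an entrywise-nonnegative matrix and a nonnegative vector. Therefore $z(t) \geq 0$ for all $t \geq t_k$, which is precisely $T_{\food, j}^\alpha \geq T_{\food, j}^\beta$ and $T_{\air, j}^\alpha \geq T_{\air, j}^\beta$ for every $j$.

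I expect the only genuinely substantive step to be the structural bookkeeping: confirming that $A$ is Metzler and that $B \leq 0$, where the sign hinges on the double negation $T_{\evap}<0$ together with $\alpha-\beta\leq 0$, so it is easy to slip a sign. Once the cooperative structure is established, positivity of $z$ is routine. One robustness check worth noting is that the statement is cleanest for the LTI model of \eqref{opt_evap}, in which $T_{\evap}$ is frozen at $\bar{P}_{\suc}$; were $T_{\evap}$ allowed to vary along the trajectory, I would obtain the same conclusion through a differential-inequality (Kamke-type) comparison argument, but the matrix-exponential route above is the most direct for the linear system actually used.
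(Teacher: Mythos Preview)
Your proof is correct and follows essentially the same approach as the paper: both arguments hinge on verifying that $A$ is Metzler (all off-diagonal entries nonnegative) and that $B\leq 0$ entrywise because $T_{\evap}\leq 0$, then conclude input-monotonicity. The only difference is that the paper invokes an external monotone-systems result (Proposition~III.2 of Angeli--Sontag) for the final step, whereas you carry it out directly via the variation-of-constants formula and nonnegativity of $e^{A\tau}$ for Metzler $A$; your version is therefore slightly more self-contained.
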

\begin{proof}
The proof is contained in Appendix \ref{app_monotone}.
\end{proof}

\section{Approximation Algorithms}\label{app_alg}

We present two approximation methods for the inner optimization problem. One is based on linear approximation, and another utilizes submodularity.
These will give approximate solutions with guaranteed suboptimality bounds.
We further show that, by simply modifying these approximation algorithms for the inner optimization problem, we can obtain a near-optimal solution to the outer optimization problem.

\subsection{Linear Approximation-Based Optimization}

We first consider a linear approximation-based approach to the inner combinatorial optimization problem \eqref{opt_evap}.
It is convenient to work with the following value function:
\begin{equation}\label{opt_submodular}
{V}(\alpha) = J(0)-J(\alpha).
\end{equation}
The value ${V}(\alpha)$ represents the reduction in the quadratic deviation from from the upper-bound $T_i^{\max}$, $i=1, \cdots, n$, when  expansion valve $j$ is chosen to be open only for $j$ such that $\alpha_j = 1$.
Note that this value function is normalized such that $V(0) = 0$.
The Taylor expansion of $V$ at $0$ gives $V(\alpha) = DV(0)^\top \alpha + O(\alpha^2)$ assuming the derivative $DV$ is well-defined. This motivates us to consider the following first-order approximation of the expansion valve optimization problem \eqref{opt_evap}:
\begin{equation}\label{l_0_app}
\begin{split}
\max_{\alpha \in \{0,1\}^n} \quad & 
DV(0)^\top \alpha \\
&\|\alpha \|_0 \leq K. 
\end{split}
\end{equation}
The $i$th entry $[DV(0)]_i$ of the derivative represents the marginal benefit of opening expansion valve $i$.
Therefore, the approximate problem \eqref{l_0_app} can be interpreted as maximizing the marginal benefit of valve operation while guaranteeing that the number of open valves is less than or equal to $K$.  A detailed method to define and compute the derivative can be found in \cite{Yang2016}.
Computing the derivative should also take into account the dependency of the state $x$ on the binary decision variable $\alpha$. For example, an adjoint-based approach can be used to handle this dependency \cite{Kokotovic1967}.

The first advantage of the proposed approximation approach is 
that it gives an approximate solution with a provable suboptimality bound.
The bound is \emph{a posteriori}, which does not require the globally optimal solution $\alpha^{opt}$ but the solution $\alpha^{\star}$ of \eqref{l_0_app}.
\begin{theorem}[\cite{Yang2016}] \label{thm:lin}
Let $\alpha^\star$ be a solution to the approximate problem \eqref{l_0_app}. If $DV(0)^\top \alpha^\star \neq 0$, then
the following suboptimality bound holds:
\begin{equation}\nonumber
\rho V(\alpha^{opt}) \leq V(\alpha^\star),
\end{equation}
where
\begin{equation}\nonumber
\rho = \frac{V(\alpha^\star)}{DV(0)^\top \alpha^\star} \leq 1.
\end{equation}
If $DV(0)^\top \alpha^\star = 0$, then $V(\alpha^{opt}) = V(0) = 0$, i.e., $0$ is an optimal solution.
\end{theorem}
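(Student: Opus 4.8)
The plan is to reduce everything to one structural fact: the value function $V$ is concave and differentiable, so its first-order form at the origin is a global overestimate. I would first record why this holds in our setting. Under constraint \eqref{fix} the control is frozen at $u(t)\equiv\alpha$, so the solution of the linear system \eqref{sys} is an affine function of $\alpha$; hence each air temperature $T_{\air,i}(t)$ is affine in $\alpha$. Since $a\mapsto (a)_+^2$ is convex and integration over $(t_k,t_{k+1}]$ preserves convexity, $J$ is convex in $\alpha$, and therefore $V=J(0)-J(\alpha)$ is concave. Because $(\cdot)_+^2$ is $C^1$, the gradient $DV(0)$ is well-defined, matching the hypothesis of the theorem.

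The engine of the proof is the supporting-hyperplane inequality for concave functions. Concavity together with the normalization $V(0)=0$ gives $V(\alpha)\le V(0)+DV(0)^\top\alpha=DV(0)^\top\alpha$ for every $\alpha$ in the relaxed domain, in particular for every feasible $\alpha$. I would then invoke the optimality of $\alpha^\star$ for the linear program \eqref{l_0_app}: since $\alpha^{opt}$ and $\alpha^\star$ lie in the common feasible set $\{\alpha\in\{0,1\}^n : \|\alpha\|_0\le K\}$ and $\alpha^\star$ maximizes $DV(0)^\top\alpha$ over it, we have $DV(0)^\top\alpha^{opt}\le DV(0)^\top\alpha^\star$. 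Chaining these,
\[
V(\alpha^{opt}) \;\le\; DV(0)^\top \alpha^{opt} \;\le\; DV(0)^\top \alpha^\star \;=\; \frac{V(\alpha^\star)}{\rho},
\]
the final equality being the definition of $\rho$. Multiplying by $\rho$ delivers $\rho\,V(\alpha^{opt})\le V(\alpha^\star)$; to keep the inequality direction I use $\rho\ge 0$, which holds because the monotonicity of the dynamics in Proposition \ref{monotone} forces $J(\alpha)\le J(0)$, i.e.\ $V\ge 0$, on the whole feasible set.

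It remains to settle the two subsidiary assertions. For $\rho\le 1$, apply the same concavity inequality at $\alpha=\alpha^\star$ to get $V(\alpha^\star)\le DV(0)^\top\alpha^\star$; since $\alpha=0$ is feasible with linear value $0$, the maximizer obeys $DV(0)^\top\alpha^\star\ge 0$, so under $DV(0)^\top\alpha^\star\neq 0$ the denominator is strictly positive and the quotient $\rho$ is at most $1$. For the degenerate case $DV(0)^\top\alpha^\star=0$, the linear objective is nonpositive throughout the feasible set, so $DV(0)^\top\alpha^{opt}\le 0$ and concavity gives $V(\alpha^{opt})\le 0$; but $\alpha^{opt}$ maximizes $V$ and $0$ is feasible with $V(0)=0$, whence $V(\alpha^{opt})\ge 0$, forcing $V(\alpha^{opt})=V(0)=0$.

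I expect the only genuinely delicate point to be the passage from the discrete variable $\alpha\in\{0,1\}^n$ to the continuous relaxation on which $DV(0)$ and concavity actually live: one must make precise the smooth extension of $V$ to $[0,1]^n$ (equivalently, the adjoint-based derivative discussed in the text) and confirm that the supporting-hyperplane bound holds globally rather than only up to the Taylor remainder $O(\alpha^2)$ near the origin. The affine dependence of the state on $\alpha$ guaranteed by the linearity of \eqref{sys} is exactly what upgrades the local Taylor statement to the global inequality $V(\alpha)\le DV(0)^\top\alpha$, so once that extension is in place the remaining steps are a short chain of inequalities.
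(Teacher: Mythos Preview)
Your proof is correct and follows the same route as the paper: establish that the state depends affinely on $\alpha$, deduce concavity of $V$ on the relaxed space, and conclude via the first-order (supporting-hyperplane) bound. The paper simply writes the variation-of-constants formula for $x(t)$, notes concavity of $V$, and then invokes Theorem~2 of \cite{Yang2016}; you have unpacked that cited result into the explicit chain $V(\alpha^{opt})\le DV(0)^\top\alpha^{opt}\le DV(0)^\top\alpha^\star$ and handled the sign of $\rho$ and the degenerate case, so your argument is more self-contained but not a different approach.
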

Its proof is contained in Appendix \ref{thm_lin}.
This theorem suggests that the approximate solution's performance is greater than $(\rho \times 100)$\% of the globally optimal solution's performance. 

The second advantage of the proposed method is that
it yields an efficient algorithm to solve the approximate problem \eqref{l_0_app}.
Specifically, we design a very simple algorithm based on the ordering  of the entries of $DV(0)$.
Let $\bold{d} (\cdot)$ denote the map from $\{1, \cdots, n\}$ to $\{1, \cdots, n\}$ such that 
\begin{equation} \label{des}
[DV(0)]_{\bold{d}(i)} \geq [DV(0)]_{\bold{d}(j)}
\end{equation}
 for any $i, j \in \{1, \cdots, n\}$ such that $i \leq j$. Such a map can be constructed using a sorting algorithm with $O(n \log n)$ complexity. 
Such a map may not be unique.
We  let $\alpha_{\bold{d}(i)} = 1$ for $i=1, \cdots, K$ if $[DV(0)]_{\bold{d}(i)} > 0$.
A more detailed algorithm to solve this problem  is presented in Algorithm~\ref{algorithm:l_0}.
Note that it is a single-pass algorithm, i.e., does not require multiple iterations.
Furthermore, Lines 9--12 can be parallelized.

\begin{algorithm}

\textbf{Initialization:}\\
$\alpha \leftarrow 0$;

\vspace{0.05in}
\textbf{Construction of $\bold{d}$:}\\
Compute $DV(0)$;\\
Sort the entries of $DV(0)$ in descending order;\\
Construct $\bold{d}: \{1, \cdots, n\} \to \{1, \cdots, n\}$ satisfying \eqref{des};

\vspace{0.05in}
\textbf{Solution of \eqref{l_0_app}:}\\
\While{$[DV(0)]_{\bold{d}(i)} > 0$ and $i \leq K$}{
$\alpha_{\bold{d}(i)} \leftarrow 1$;\\
$i \leftarrow i+1$;
}
\caption{Algorithm for the approximate problem \eqref{l_0_app}
}
\label{algorithm:l_0}
\end{algorithm}

\begin{remark}\label{multi}
The proposed linear approximation method is applicable to multi-period optimization problems, in which the objective  is given by $J(\alpha) := \sum_{k=1}^{N_{period}} J_k (\alpha^k)$ and
the control variable is time-varying, i.e., $\alpha = (\alpha^1, \cdots, \alpha^{N_{period}}) \in \mathbb{R}^{n\times N_{period}}$. In such a case, we  compute the derivative $DV_k$ of $V_k(\alpha^k) := J_k(0) - J_k(\alpha^k)$
 for each $k$. 
The objective function can be approximated as $\sum_{k=1}^{N_{period}} DV_k(0)^\top \alpha^k$, which is still linear in $\alpha$. Therefore, we can use the proposed algorithm.
\end{remark}

\subsection{Submodular Optimization}

The second approach gives another approximate solution of the expansion valve optimization problem \eqref{opt_evap} with a  suboptimality bound.  This solution is generally different from the solution obtained by the first approach.
Let $\Omega := \{1, \cdots, n\}$ be the set of expansion valves to be controlled.
We define a set function, $\mathcal{V}: 2^\Omega \to \mathbb{R}$, as
\begin{equation}\nonumber
\mathcal{V}(X) = V(\mathbb{I}(X)),
\end{equation}
where the value function $V$ is defined as \eqref{opt_submodular} and
$\mathbb{I}(X) := (\mathbb{I}_1(X), \cdots, \mathbb{I}_n(X)) \in \{0,1\}^n$ is the indicator vector of the set $X$ such that
$\mathbb{I}_i(X) := 
0$ if $i \notin X$ and
$\mathbb{I}_i(X) := 
1$ if $i \in X$. 
In other words, $\mathcal{V}$ is a set function representation of $V$. 
The expansion valve optimization problem \eqref{opt_evap} is equivalent to selecting the set $X \subseteq \Omega$ such that $|X| \leq K$ 
to maximize the value function $\mathcal{V}(X)$, i.e.,
\begin{equation}\label{opt_submodular}
\begin{split}
\max_{X \in 2^\Omega} \quad &\mathcal{V}(X) \\
\mbox{s.t.}\quad &|X|\leq K.
\end{split}
\end{equation}
We observe that the value function $\mathcal{V}$ has a useful structure, which is called the \emph{submodularity}.
It represents a \emph{diminishing return} property such that opening an expansion valve when a smaller set of valves is opened gives a greater marginal benefit than opening it when a larger set of valves is already opened.

\begin{theorem}\label{submodular}
The set function $\mathcal{V} : 2^\Omega \to \mathbb{R}$ is submodular, i.e.,
for any $X \subseteq Y \subseteq \Omega$ and any $\bold{a} \in \Omega \setminus Y$,
\begin{equation}\nonumber
\mathcal{V}(X \cup \{\bold{a}\}) - \mathcal{V}(X) \geq \mathcal{V}(Y \cup \{\bold{a}\}) - \mathcal{V}(Y).
\end{equation}
Furthermore, it is monotone, i.e., for any $X \subseteq Y \subseteq \Omega$
\begin{equation}\nonumber
\mathcal{V}(X) \leq \mathcal{V} (Y).
\end{equation}
\end{theorem}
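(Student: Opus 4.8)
The plan is to transfer both claims to the underlying real-valued objective $J$ and then exploit two facts: that the air temperatures depend \emph{affinely} on the valve vector over a single control interval, and that the penalty $\phi(r) := (r)_+^2$ is convex and nondecreasing. First I would rewrite the two set-function inequalities in terms of $J$. Writing $\alpha := \mathbb{I}(X)$ and $\beta := \mathbb{I}(Y)$ with $\alpha \le \beta$ entrywise, and letting $e_a$ be the $a$-th unit vector, the marginal gains become $\mathcal{V}(X\cup\{\mathbf{a}\})-\mathcal{V}(X) = J(\alpha)-J(\alpha+e_a)$ and likewise $\mathcal{V}(Y\cup\{\mathbf{a}\})-\mathcal{V}(Y) = J(\beta)-J(\beta+e_a)$, since $\mathbf{a}\notin Y\supseteq X$. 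Hence submodularity is equivalent to
\[
J(\alpha+e_a)-J(\alpha) \;\le\; J(\beta+e_a)-J(\beta),
\]
that is, to the statement that the forward difference of $J$ in coordinate $a$ is nondecreasing along the order on $\{0,1\}^n$; and monotonicity of $\mathcal{V}$ is equivalent to $J(\beta)\le J(\alpha)$, which is precisely Proposition \ref{monotone} (more valves open yields lower temperatures, hence a smaller penalty).

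Next I would make the affine dependence explicit. Over $(t_k,t_{k+1}]$ the input in \eqref{sys} is the constant vector $\alpha$, so the solution is $x(t)=e^{A(t-t_k)}\mathbf{x}_{\meas}+\int_{t_k}^{t} e^{A(t-s)}(B\alpha+C)\,ds$, which is affine in $\alpha$. Restricting to the air block, $T_{\air,i}^{\alpha}(t)=c_i(t)+\sum_{j}M_{ij}(t)\,\alpha_j$, where $M_{ij}(t)$ is the appropriate entry of $\int_{t_k}^{t}e^{A(t-s)}B\,ds$ and $c_i(t)$ collects the $\alpha$-independent terms. Proposition \ref{monotone} states that increasing $\alpha$ lowers every air temperature, so each coefficient satisfies $M_{ij}(t)\le 0$. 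This sign is the engine of the argument, and I would invoke it rather than compute directly with $e^{A(t-s)}B$.

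Then I would establish the coordinate-$a$ difference inequality one display case and one time instant at a time. Writing $g_i^{\alpha}(t):=T_{\air,i}^{\alpha}(t)-T_i^{\max}$, we have $J(\alpha)=\sum_i\int_{t_k}^{t_{k+1}}\phi(g_i^{\alpha}(t))\,dt$ and $g_i^{\alpha+e_a}(t)=g_i^{\alpha}(t)+M_{ia}(t)$. For fixed $i,t$ set $m:=M_{ia}(t)\le 0$ and consider $h(g):=\phi(g+m)-\phi(g)$; since $\phi$ is $C^1$ with nondecreasing derivative $\phi'(r)=2(r)_+$, we get $h'(g)=\phi'(g+m)-\phi'(g)\le 0$, so $h$ is nonincreasing. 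Because $\alpha\le\beta$ forces $g_i^{\beta}(t)\le g_i^{\alpha}(t)$ (again by $M_{ij}\le 0$), it follows pointwise that $\phi(g_i^{\alpha}+M_{ia})-\phi(g_i^{\alpha})\le \phi(g_i^{\beta}+M_{ia})-\phi(g_i^{\beta})$. Integrating over $t$ and summing over $i$ gives exactly $J(\alpha+e_a)-J(\alpha)\le J(\beta+e_a)-J(\beta)$, hence submodularity; and summing the pointwise bound $\phi(g_i^{\beta})\le\phi(g_i^{\alpha})$ (from $\phi$ nondecreasing together with $g_i^\beta \le g_i^\alpha$) yields $J(\beta)\le J(\alpha)$, hence monotonicity.

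The routine parts are the ODE solution formula and the elementary calculus for $h$; the step carrying the content is the reduction of submodularity to the monotonicity of a one-dimensional difference, which convexity of $\phi$ then supplies for free. The one place I would be careful is the sign bookkeeping: $M_{ia}(t)\le 0$ is used in two distinct roles — to force $g_i^{\beta}\le g_i^{\alpha}$ and to make $h$ nonincreasing — and both are legitimized by Proposition \ref{monotone}. I would also note in passing that $\phi$ being merely $C^1$ (its second derivative has a kink at $0$) causes no difficulty, since the proof only needs $\phi'$ nondecreasing, not $\phi''\ge 0$ pointwise.
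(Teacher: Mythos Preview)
Your argument is correct and follows essentially the same route as the paper: affine dependence of the state on $\alpha$ gives modularity of the temperatures (so the increment from adding valve $\mathbf{a}$ is the same regardless of the base set), Proposition~\ref{monotone} gives the sign/monotonicity, and convexity of $(r)_+^2$ turns these into the submodular inequality. Your write-up is in fact tidier than the paper's on two counts: you work with $T_{\air,i}$, which is the variable actually appearing in $J$, and you handle the affine (not merely linear) dependence correctly, whereas the paper's modularity identity $T^X=\sum_{\mathbf{a}\in X}T^{\{\mathbf{a}\}}$ drops the $T^{\emptyset}$ offset.
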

\begin{proof}
See Appendix \ref{app_submodular}.
\end{proof}

The submodularity of $\mathcal{V}$ guarantees that Algorithm \ref{algorithm:submodular}, which is a greedy algorithm, provides an $(1-\frac{1}{e})$-optimal solution. In other words, the approximate solution's performance is greater than $(1-\frac{1}{e}) \approx 63\%$ of the oracle's performance.
In our case study, the actual submodularity is $98.9\%$, which is significantly greater than this theoretical bound.

\begin{algorithm}

\textbf{Initialization:}\\
$X \leftarrow \emptyset$;

 \vspace{0.05in}
\textbf{Greedy algorithm:}\\
\While{$i \leq K$}{
$\bold{a}^* \in \arg \max_{\bold{a} \in \Omega \setminus X} \mathcal{V}( X \cup \{\bold{a}\})$;\\
$X \leftarrow  X \cup \{\bold{a}^*\}$;\\
$i \leftarrow i+1$;
}
\caption{Greedy algorithm for \eqref{opt_submodular}}
\label{algorithm:submodular}
\end{algorithm}

\begin{theorem}[\cite{Nemhauser1978}]
Algorithm \ref{algorithm:submodular} 
is a $\left(1 - \frac{1}{e}\right)$-approximation algorithm.  In other words, if we let $X^\star$ be the solution obtained by this greedy algorithm, then the following suboptimality bound holds:
\begin{equation}\nonumber
\left(1 - \frac{1}{e}\right) \mathcal{V}(X^{opt}) \leq \mathcal{V}(X^\star),
\end{equation}
where $X^{opt}$ is an optimal solution to \eqref{opt_submodular}.
\end{theorem}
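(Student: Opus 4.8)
The plan is to follow the classical Nemhauser--Wolsey--Fisher argument, which controls the optimality gap after each greedy step and then telescopes it into a geometric decay. First I would set up notation for the intermediate iterates produced by Algorithm \ref{algorithm:submodular}: let $X_0 := \emptyset \subseteq X_1 \subseteq \cdots \subseteq X_K =: X^\star$ be the sets obtained after each pass through the while-loop, so that $X_{i+1} = X_i \cup \{\bold{a}\}$ for the element $\bold{a} \in \Omega \setminus X_i$ maximizing the marginal gain $\mathcal{V}(X_i \cup \{\bold{a}\}) - \mathcal{V}(X_i)$. Write $O := X^{opt}$ for a fixed optimal set, which satisfies $|O| \leq K$, and recall from the normalization $V(0) = 0$ that $\mathcal{V}(\emptyset) = 0$.

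The central step is to bound the remaining gap $\mathcal{V}(O) - \mathcal{V}(X_i)$ at each iteration $i$. By monotonicity (Theorem \ref{submodular}) we have $\mathcal{V}(O) \leq \mathcal{V}(O \cup X_i)$, and I would then decompose the increment $\mathcal{V}(O \cup X_i) - \mathcal{V}(X_i)$ as a telescoping sum obtained by adjoining the elements of $O \setminus X_i$ one at a time. Applying submodularity to each term---the marginal value of inserting an element into an intermediate set is at most its marginal value when inserted into the smaller base $X_i$---yields $\mathcal{V}(O) - \mathcal{V}(X_i) \leq \sum_{\bold{a} \in O \setminus X_i} \big[\mathcal{V}(X_i \cup \{\bold{a}\}) - \mathcal{V}(X_i)\big]$. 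Since the greedy rule selects the single best marginal gain at step $i$, every summand is bounded by $\mathcal{V}(X_{i+1}) - \mathcal{V}(X_i)$, and there are at most $|O| \leq K$ of them, giving the key recursive inequality $\mathcal{V}(O) - \mathcal{V}(X_i) \leq K\big[\mathcal{V}(X_{i+1}) - \mathcal{V}(X_i)\big]$.

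From here the argument is purely algebraic. Setting $\delta_i := \mathcal{V}(O) - \mathcal{V}(X_i)$, the inequality rearranges to $\delta_{i+1} \leq (1 - \tfrac{1}{K})\,\delta_i$, so iterating from $\delta_0 = \mathcal{V}(O)$ gives $\delta_K \leq (1 - \tfrac{1}{K})^K \mathcal{V}(O)$. Invoking the elementary bound $(1 - \tfrac{1}{K})^K \leq e^{-1}$ then yields $\mathcal{V}(O) - \mathcal{V}(X^\star) \leq e^{-1}\mathcal{V}(O)$, which rearranges to the claimed inequality $(1 - \tfrac{1}{e})\,\mathcal{V}(X^{opt}) \leq \mathcal{V}(X^\star)$.

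I expect the main obstacle to be justifying the submodular decomposition in the central step: one must verify that expanding $\mathcal{V}(O \cup X_i) - \mathcal{V}(X_i)$ into single-element marginals and bounding each by its marginal over the fixed base $X_i$ is legitimate, which requires invoking submodularity repeatedly along a chain of nested intermediate sets rather than in a single application. Everything else---the monotonicity step, the geometric recursion on $\delta_i$, and the exponential estimate---is routine once this inequality is established.
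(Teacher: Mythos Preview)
Your argument is correct and is precisely the classical Nemhauser--Wolsey--Fisher proof. Note, however, that the paper does not supply its own proof of this theorem: it is stated with a citation to \cite{Nemhauser1978} and used as a black box, with the paper's effort going into verifying the hypotheses (monotonicity and submodularity of $\mathcal{V}$, Theorem~\ref{submodular}). So there is no ``paper's own proof'' to compare against; what you have written is exactly the argument one finds in the cited reference.
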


Lines 5--9 of Algorithm \ref{algorithm:submodular} makes a locally optimal choice at each iteration. Therefore, it significantly reduces the search space, i.e., it does not search over all possible combinations of open expansion valves.  
When the expansion valve optimization problem \eqref{opt_evap} is extended to multi-stage optimization, 
a greedy algorithm achieves the same suboptimality bound 
using
adaptive (or string) submodularity and monotonicity  of the value function \cite{Golovin2011, Alaei2010, Liu2014}.

\subsection{Modified Algorithms for the Outer Optimization Problem}

We now modify the two approximation algorithms for the inner problem \eqref{opt_evap} to solve the full bilevel optimization problem.
In both Algorithms \ref{algorithm:l_0} and \ref{algorithm:submodular},
the expansion valve chosen to be open at iteration $i$ (line 10 of Algorithm \ref{algorithm:l_0} and line 7 of Algorithm \ref{algorithm:submodular}) is \emph{independent of the selections at later iterations}.
This independency  plays an essential role in incorporating the outer optimization problem into the algorithms.
To be more precise, we compare the cases of $K = l$ and $K = l+1$.  Let $\alpha^{l}$ and $\alpha^{l + 1}$ be the solutions in the two cases obtained by Algorithm \ref{algorithm:l_0}.
Since the expansion valve selected to be open at  iteration $l+1$ do not affect the choices at earlier iterations,
we have $\alpha^{l}_{\bold{d}(i)} = \alpha^{l+1}_{\bold{d}(i)}$ for $i=1, \cdots, l$.
Therefore, we do not have to re-solve the entire inner optimization problem for $K = l+1$ if we already have the solution for $K = l$; it suffices to run one more iteration for $i = l+1$ to obtain $\alpha^{l+1}_{\bold{d}(l +1)}$.
This observation allows us to simply modify lines 8--12 in Algorithm \ref{algorithm:l_0} as Algorithm \ref{algorithm:l_0_modified}.
\begin{algorithm}
\While{$[DV(0)]_{\bold{d}(i)} > 0$ and $J(\alpha) > \Delta$}{
$\alpha_{\bold{d}(i)} \leftarrow 1$;\\
$i \leftarrow i+1$;
}
\caption{Modified version of Algorithm 1 for the outer optimization problem \eqref{outer}
}
\label{algorithm:l_0_modified}
\end{algorithm}
As we can see in line 2, we select expansion valves to be open until when the temperature upper-bound violation $J(\alpha)$ is less than or equal to the threshold $\Delta$.
Similarly, we modify lines 5--9 of Algorithm \ref{algorithm:submodular} as Algorithm \ref{algorithm:submodular_modified} to solve the outer problem.

\begin{algorithm}
\While{$J(\mathbb{I}(X)) > \Delta$}{
$\bold{a}^* \in \arg \max_{\bold{a} \in \Omega \setminus X} \mathcal{V}( X \cup \{\bold{a}\})$;\\
$X \leftarrow  X \cup \{\bold{a}^*\}$;\\
$i \leftarrow i+1$;
}
\caption{Modified version of Algorithm 2 for the outer optimization problem \eqref{outer}}
\label{algorithm:submodular_modified}
\end{algorithm}

\subsection{Scalability}

We now compare the complexity of Algorithms \ref{algorithm:l_0_modified} and \ref{algorithm:submodular_modified}.
Algorithm \ref{algorithm:l_0_modified}, which is based on a linear approximation, is \emph{single pass} in the sense that, after computing the derivative and ordering its entries only once, we can obtain the solution.
Calculating the derivative requires $O(n^2N_T)$, where $N_T$ is the number of time points in the time interval $[t_k, t_{k+1}]$ used to integrate the dynamical system \cite{Yang2016}, if a first-order scheme is employed.
Therefore, the total complexity including the sorting step is $O(n^2 N_T) + O(n \log n)$. 
On the other hand, Algorithm \ref{algorithm:submodular_modified}, which is a greedy algorithm, chooses a locally optimal solution at each stage.
In other words, this iterative greedy choice approach requires one to find an entry that maximizes the increment in the current payoff at every stage.
Its complexity is $O(n^3 N_T)$.
Therefore, Algorithm \ref{algorithm:l_0_modified} is computationally more efficient as the number $n$ of display cases grows.
Note, however, that Algorithm \ref{algorithm:submodular_modified} is also scalable because
its complexity is cubic in $n$ and linear in $N_T$.

\section{Case Studies} \label{case}

In this section, we examine the performance of the proposed online optimization-based controllers.
For fair comparisons with the traditional controller, 
we use the parameter data reported in Section \ref{convention} with a refrigerator unit that has 10 display cases.
Fig. \ref{fig:temp} and \ref{fig:compressor} illustrates the simulation results of 
 the proposed controllers.

\subsection{Energy Efficiency}

As opposed to the synchronized food temperature profiles controlled by the traditional method (see Fig. \ref{fig:PI} (a)),
the proposed controllers induce alternating patterns of the temperatures as shown in Fig. \ref{fig:temp}.
Such patterns result from the explicit consideration of heat transfers between neighboring display cases in the optimization module through the constraint \eqref{sys}, which represents the interconnected temperature dynamics.
Using the spatial heat transfers, the proposed controllers do not turn ON or OFF all the expansion valves at the same time. 
Instead, they predict the temperature evolution for a short period and selects the valves to turn ON that are effective to minimize the deviation from the desirable temperature range during the period.
As a result, the ON/OFF operation of expansion valves is desynchronized, unlike in the case of the traditional controller.
This desynchronization maintains the temperatures near the upper-bound $T^{\max}$ reducing temperature fluctuations.
Therefore, it intuitively improves energy efficiency.
As summarized in Table \ref{efficiency},
the proposed controllers save $7.5$--$8\%$ of energy.

\begin{center}
    \captionof{table}{Energy savings by the proposed controllers} \label{efficiency}
  \begin{tabular}{   l || p{0.6in} | p{0.6in}| p{0.6in}}
 & PI & linear & submodular  \\ \hline
average kW  &11.24  & 10.34 & 10.40 \\
energy saving & -- & 8.0\% & 7.5\%\\
suboptimality & 90.1\%  & 99.5\%   & 98.9\%
  \end{tabular}
\end{center}

Note that the outer optimization module minimizes the total energy consumption while the inner optimization module is responsible for maintaining the temperature profiles in a desirable range.
When the bilevel combinatorial problem is exactly solved for all time, 
the average power consumption is $10.29$kW.
Therefore, the two proposed controllers' performances are $99.5\%$ and $98.9\%$ of the optimal controller although their theoretical suboptimality bounds are $39\%$ and $63\%$.

\begin{figure}[tp]
\centering
  \includegraphics[width=3.5in]{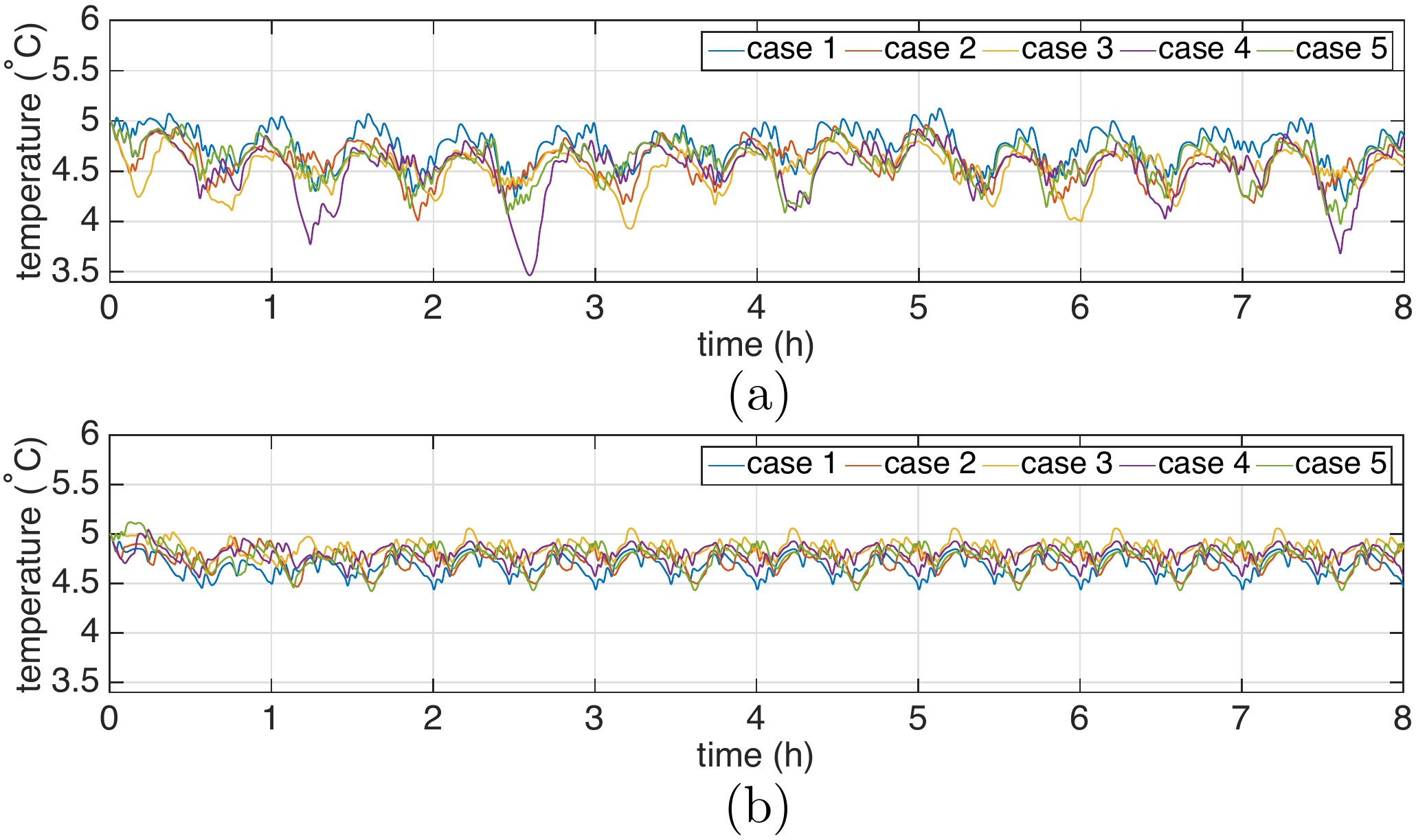}
  \caption{The food temperatures (in 5 display cases out of 10) controlled by (a) the linear approximation-based algorithm (Algorithm \ref{algorithm:l_0_modified}), and (b) the submodular optimization algorithm (Algorithm \ref{algorithm:submodular_modified}).   }\label{fig:temp}
\end{figure}

\begin{figure}[tp]
\centering
  \includegraphics[width=3.5in]{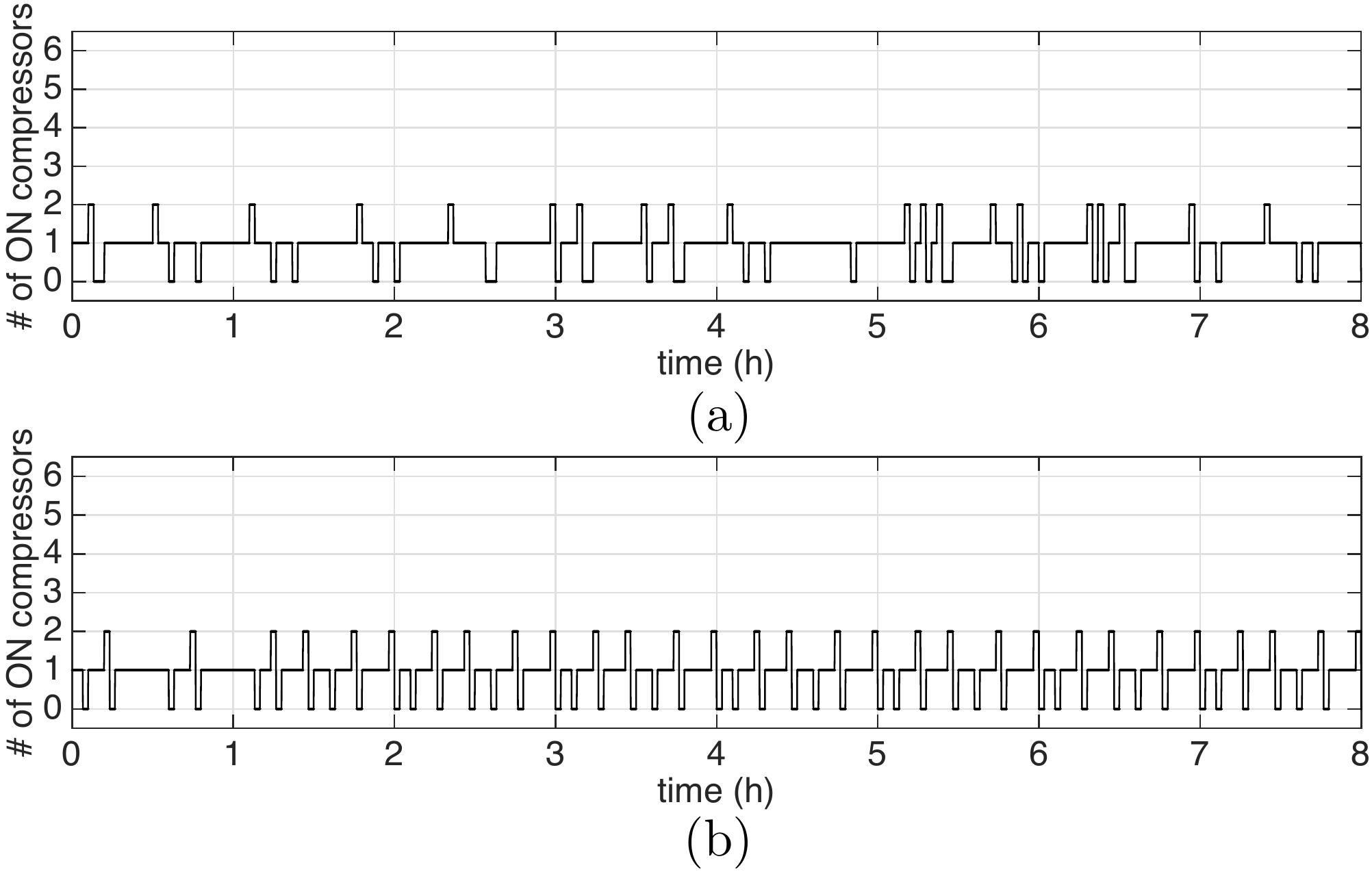}
  \caption{The number of ON compressors controlled by (a) the linear approximation-based algorithm (Algorithm \ref{algorithm:l_0_modified}), and (b) the submodular optimization algorithm (Algorithm \ref{algorithm:submodular_modified}). }\label{fig:compressor}
\end{figure}

\subsection{Reduced Compressor Switching}

Another advantage of  the proposed controllers is
the considerable reduction on the number of compressor switching instances.
By desynchronizing the switching instances of expansion valves in the inner optimization module, the proposed controllers significantly reduce the variation of suction pressure.
Our conservative compressor control approach presented in Section \ref{conservative} also helps to minimize the deviation of the suction pressure from its reference.
As a result, the controllers significantly reduce the fluctuations on the number of ON compressors as shown in Fig. \ref{fig:compressor}.
First, the maximum number of ON compressors is decreased from six to two.  This reduction suggests that a mechanically more compact compressor or a smaller number of compressors in the rack may be enough if the proposed controllers are adopted.
Second, the proposed controllers reduce
the number of compressor switching instances by 54.0--71.6\% as summarized in Table \ref{table:switching}.
These infrequent compressor operation strategies are beneficial to decelerate the mechanical degradation of compressors.

\begin{center}
\captionof{table}{Compressor switching reductions by the proposed controllers} \label{table:switching}
  \begin{tabular}{   l || p{0.6in} | p{0.6in}| p{0.6in}}
 & PI & linear & submodular  \\ \hline
\# of switchings  & 324  & 92 & 149 \\
reduction & -- & 71.6\% & 54.0\%
  \end{tabular}
\end{center}

\subsection{Comparisons of the Two Proposed Controllers}

Fig. \ref{fig:temp} illustrates that the submodular optimization-based controller maintains the temperatures in a narrower range than the linear approximation-based controller.
This feature is owing to the fact that the greedy algorithm used in the submodular optimization-based method avoids violating the temperature upper bound in a locally optimal way.
However, to keep the temperatures in a narrower range
this approach requires a faster adjustment of suction pressure than the linear approximation-based method.
As a result, the proposed compressor controller performs a more frequent switching of compressors when the submodular optimization-based method is adopted (see Fig. \ref{fig:compressor}  and Table \ref{table:switching}).
Furthermore, this frequent compressor switching induces an inefficient use of the compressor rack and therefore turns ON more compressors on average (in time).
Therefore, the submodular optimization-based controller consumes slightly more energy than the linear approximation-based controller as reported in Table \ref{efficiency}.

\subsection{Automated Demand Response under Real-Time Pricing}\label{dr}

\emph{Real-time pricing} of electricity refers to passing wholesale prices through to end users. At least in theory, it is shown to improve electricity market efficiency among other benefits~\cite{Borenstein2005}.\footnote{However, real-time prices fail to capture the economic value of responsive loads in general \cite{Sioshansi2009}.}
However, consumer should bear the risk of receiving high energy bills if consumers do not appropriately react to fluctuating wholesale prices.
Such a risk transfer to end-users under real-time pricing can be reduced by automated demand response (ADR) \cite{Piette2008} and can also be limited by contracts for ADR \cite{Yang2015acc}.
In this subsection, we demonstrate the utility of our method as a control tool for refrigeration ADR under real-time pricing. 
In particular, we consider the scenario of energy arbitrage: supermarket refrigeration systems automatically consume less energy when the real-time price is high and consume more when it is low.
Note that real-time prices are often difficult to predict and hence ADR must be performed in an online fashion by appropriately reacting to  fluctuating prices. The online optimization feature of our controllers allows them to adjust energy consumption in response to real-time prices (by changing the number $K$ of ON expansion valves in real time). 
Fig. \ref{fig:arbitrage} (a) shows the real-time wholesale electricity price  at the Austin node in the Electric Reliability Council of Texas (ERCOT) on July 3, 2013 \cite{ERCOT}. 
We use a simple thresholding law for choosing $K$: if the electricity price is greater than \$0.1/kWh, the controller allows up to 70\% of expansion valves to turn ON; otherwise, it operates as usual.
As summarized in Table \ref{table:arbitrage}, the proposed controllers save 14.3--15.0\% of energy cost compared to a standard PI controller.
In addition, the temperature deviations from $T^{\max}$ are less than $0.5^\circ$C as shown in Fig. \ref{fig:arbitrage} (b), because right after the reduction in energy consumption the controller encourages enough cooling to recover the desired temperature levels.\footnote{Such post-cooling is mostly feasible  due to the mean-reverting behavior of electricity prices which discourages sustained price peaks \cite{Cartea2005}. We can also perform pre-cooling if prices are predictable or their distributional information is available. Such pre-cooling will increase the economic benefit of the proposed method.  
}
Further applications of the proposed algorithms to ADR aggregating a large number of supermarket refrigerator units can be found in \cite{Yang2015}.

\begin{figure}[tp]
\centering
  \includegraphics[width=3.5in]{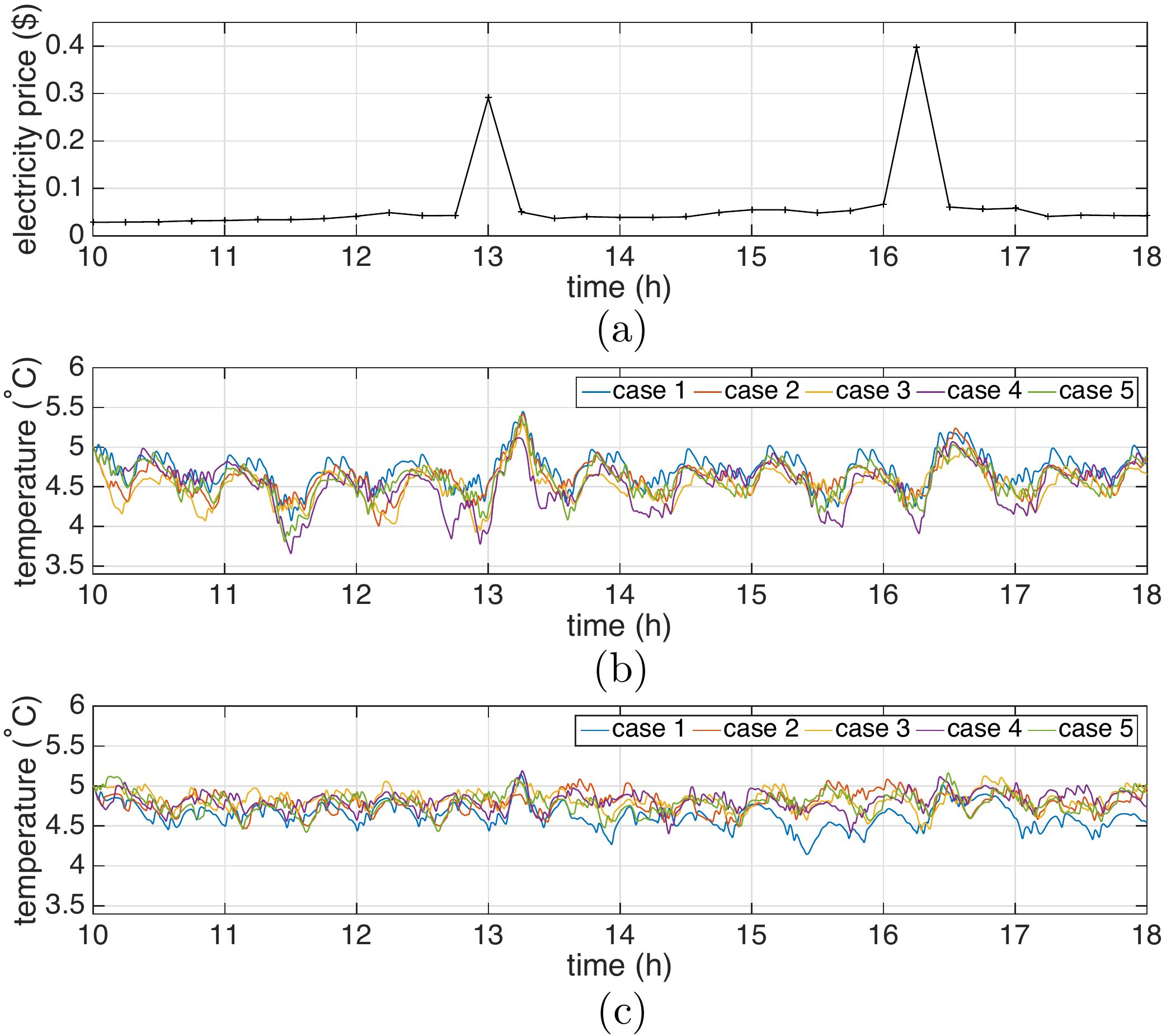}
  \caption{(a) The real-time price data;
  The food temperatures (in five display cases out of ten) controlled by (b) the linear approximation-based algorithm (Algorithm \ref{algorithm:l_0_modified}), and (c) the submodular optimization algorithm (Algorithm \ref{algorithm:submodular_modified}). }\label{fig:arbitrage}
\end{figure}

\begin{center}
\captionof{table}{Operation costs per refrigerator under real-time pricing from 10am to 6pm} \label{table:arbitrage}
  \begin{tabular}{   l || p{0.6in} | p{0.6in}| p{0.6in}}
 & PI & linear & submodular  \\ \hline
cost  & \$5.67  & \$4.82 & \$4.86 \\
cost saving & -- & 15.0\% & 14.3\%
  \end{tabular}
\end{center}

\section{Conclusions}

The proposed controller explicitly takes into account the switched and interconnected dynamics,
and is therefore  is suitable for multi-case supermarket refrigeration systems.
However, it has to solve a bilevel combinatorial optimization problem associated with switched interconnected systems in near real time, which is a challenging task. 
To overcome this difficulty, we proposed two polynomial-time approximation algorithms that are based on the structural properties of this optimization problem. These algorithms can also be adopted as a tool for solving combinatorial optimization problems arising in existing MPC-based methods.
We demonstrated the performance of the proposed controllers through case  studies using a benchmark refrigeration system model and found that $(i)$ they improve energy efficiency by 7.5--8\%,  $(ii)$ they reduce the number of compressor switchings by 54--71.6\%, and $(iii)$ they save 14.3--15\% of operation cost under a demand response scenario. In addition to conventional usages, the scalability of the proposed algorithms can contribute to 
an emerging methodology of controlling a large number of refrigerator units through cloud computing.

\appendix

\section{Proof of Proposition \ref{monotone}} \label{app_monotone}

\begin{proof}
We use the linear system representation \eqref{sys} of the food and air temperature dynamics (equations \eqref{food_dynamics} and \eqref{air_dynamics}).
We first notice that
\begin{equation}\nonumber
A_{i,j}\geq0 \quad \forall i \neq j,
\end{equation}
where $A_{i,j}$ represents the $(i,j)$th entry of the matrix $A$.
Furthermore, $k_{\air-\evap} T_{\evap} \leq 0$ due to the non-positive evaporator temperature. Hence, we have
\begin{equation}\nonumber
B_{i,j} \leq 0 \quad \forall i,j.
\end{equation}
Using Proposition III.2 in \cite{Angeli2003}, we conclude that the system \eqref{sys} is \emph{input-monotone} such that 
for any $\alpha, \beta \in \mathbb{R}^n$ with $\alpha_i \leq \beta_i$, $i=1, \cdots, n$,
\begin{equation}\nonumber
x_i^\alpha  \geq x_i^\beta, \quad i=1, \cdots, n,
\end{equation}
where $x^\alpha$ denotes the solution of the system \eqref{sys} when its input is chosen as $\alpha$. 
\end{proof}

\section{Proof of Theorem \ref{thm:lin}}\label{thm_lin}
\begin{proof}
In \eqref{sys}, we notice that 
\begin{equation}\nonumber
x(t) = e^{A (t - t_k)} \bold{x}_{\meas} + \int_{t_k}^t 
e^{A (t - s)} B \alpha ds,
\end{equation}
which implies that $x(t)$ is linear in $\alpha$. Therefore, $V$ is concave with respect to $\alpha$ in a continuously relaxed space, $\mathbb{R}^n$. Then, the result follows from Theorem 2 in \cite{Yang2016}.
\end{proof}

\section{Proof of Theorem \ref{submodular}}\label{app_submodular}
\begin{proof}
Let $T_{\food, i}^X$ denote the temperature of the food product in display case $i$ given that the expansion valves in $X$ are open.
Due to the linearity of the system dynamics \eqref{food_dynamics} and \eqref{air_dynamics}, $T_{\food, i}^X$ is modular, i.e.,
\begin{equation} \nonumber
T_{\food, i}^X = \sum_{\bold{a} \in X} T_{\food, i}^{\{\bold{a}\}}.
\end{equation}
Therefore, for any $X \subseteq Y \subseteq \Omega$  and any $\bold{a} \in \Omega \setminus Y$
\begin{equation}\nonumber
T_{\food, i}^{X \cup \{\bold{a}\}} - T_{\food, i}^X= 
T_{\food, i}^{Y \cup \{\bold{a}\}} - T_{\food, i}^Y.
\end{equation}
Furthermore, Proposition \ref{monotone} yields the following monotonicity result: for any $X \subseteq Y \subseteq \Omega$
\begin{equation}\nonumber
T_{\food, i}^X \geq T_{\food, i}^Y,
\end{equation}
i.e., as we open more expansion valves, the food temperature decreases.
Lastly, the concavity of $\mathcal{V}(X) = \mathcal{V}(\emptyset) - 
\sum_{i=1}^n \int_{t_k}^{t_{k+1}}  (T_{\food, i}^X - {T}_{i}^{\max})_+^2 dt$ in $T_{\food, i}^X$ implies that  $X \subseteq Y \subseteq \Omega$  and any $\bold{a} \in \Omega \setminus Y$
\begin{equation}\nonumber
\mathcal{V}(X \cup \{\bold{a}\}) - \mathcal{V}(X) \geq
\mathcal{V}(Y \cup \{\bold{a}\}) - \mathcal{V}(Y). 
\end{equation}
Therefore, $\mathcal{V}$ is submodular. Its monotonicity follows from Proposition \ref{monotone}.
\end{proof}

\bibliographystyle{plain}

\bibliography{reference}

\begin{thebibliography}{10}

\bibitem{ES2008}
{\em {ENERGY STAR} Building Upgrade Manual Chapter 11: Supermarkets and Grocery
  Stores}, 2008.

\bibitem{Alaei2010}
S.~Alaei and A.~Malekian.
\newblock Maximizing sequence-submodular functions and its application to
  online advertising.
\newblock {\em arXiv:1009.4153 [cs.DM]}, 2010.

\bibitem{Angeli2003}
D.~Angeli and E.~D. Sontag.
\newblock Monotone control systems.
\newblock {\em IEEE Transactions on Automatic Control}, 48(10):1684--1698,
  2003.

\bibitem{Bemporad1999}
A.~Bemporad and M.~Morari.
\newblock Control of systems integrating logic, dynamics, and constraints.
\newblock {\em Automatica}, 35(3):407--427, 1999.

\bibitem{Borenstein2005}
S.~Borenstein and S.~P. Holland.
\newblock On the efficiency of competitive electricity markets with
  time-invariant retail prices.
\newblock {\em RAND Journal of Economics}, 36(3):469--493, 2005.

\bibitem{Cartea2005}
A.~Cartea and M.~G. Figueroa.
\newblock Pricing in electricity markets: a mean reverting jump diffusion model
  with seasonality.
\newblock {\em Applied Mathematical Finance}, 12(4):313--335, 2005.

\bibitem{ERCOT}
{Electric Reliability Council of Texas}.
\newblock Real-time prices reports.

\bibitem{Golovin2011}
D.~Golovin and A.~Krause.
\newblock Adaptive submodularity: Theory and applications in active learning
  and stochastic optimization.
\newblock {\em Journal of Artificial Intelligence Research}, 42:427--486, 2011.

\bibitem{Graziano2014}
M.~Graziano and M.~Pritoni.
\newblock Gloudfridge: A cloud-based control system for commercial
  refrigeration systems.
\newblock Technical report, ACEEE Summer Study on Energy Efficiency in
  Buildings, 2014.

\bibitem{Hovgaard2013}
T.~G. Hovgaard, S.~Boyd, L.~F.~S. Larsen, and J.~B. J{\o}rgensen.
\newblock Nonconvex model predictive control for commercial refrigeration.
\newblock {\em International Journal of Control}, 86(8):1349--1366, 2013.

\bibitem{Kokotovic1967}
P.~Kokotovi\'{c} and J.~Heller.
\newblock Direct and adjoint sensitivity equations for parameter optimization.
\newblock {\em IEEE Transactions on Automatic Control}, 12(5):609--610, 1967.

\bibitem{Larsen2005}
L.~F.~S. Larsen, T.~Geyer, and M.~Morari.
\newblock Hybrid model predictive control in supermarket refrigeration systems.
\newblock In {\em Proceedings of 16th IFAC World Congress}, 2005.

\bibitem{Larsen2007}
L.~F.~S. Larsen, R.~Izadi-Zamanabadi, and R.~Wisniewski.
\newblock Supermarket refrigeration system - benchmark for hybrid system
  control.
\newblock In {\em Proceedings of 2007 European Control Conference}, 2007.

\bibitem{Li2010}
B.~Li and A.~G. Alleyne.
\newblock A dynamic model of a vapor compression cycle with shut-down and
  start-up operations.
\newblock {\em International Journal of Refrigeration}, 33(3):538--552, 2010.

\bibitem{Liu2014}
Y.~Liu, E.~K.~P. Chong, A.~Pezeshki, and B.~Moran.
\newblock Bounds for approximate dynamic programming based on string
  optimization and curvature.
\newblock In {\em Proceedings of the 53rd IEEE Conference on Decision and
  Control}, 2014.

\bibitem{Minko2015}
T.~Minko, R.~Wisniewski, J.~D. Bendtsen, and R.~Izadi-Zamanabadi.
\newblock Cost efficient optimization based supervisory controller for
  supermarket subsystems with heat recovery.
\newblock In {\em Proceedings of 2015 European Control Conference}, 2015.

\bibitem{Navigant2009}
{Navigant Consulting, Inc.}
\newblock Energy savings potential and {R\&D} opportunities for commercial
  refrigeration.
\newblock Technical report, {U.S. Department of Energy}, 2009.

\bibitem{Nemhauser1978}
G.~L. Nemhauser, L.~A. Wolsey, and M.~L. Fisher.
\newblock An analysis of approximations for maximising submodular set functions
  -- {I}.
\newblock {\em Mathematical Programming}, 265--294, 1978.

\bibitem{Pedersen2013}
R.~Pedersen, J.~Schwensen, S.~Sivabalan, C.~Corazzol, S.~E. Shafiei,
  K.~Vinther, and J.~Stoustrup.
\newblock Direct control implementation of a refrigeration system in smart
  grid.
\newblock In {\em Proceedings of 2013 American Control Conference}, 2013.

\bibitem{Piette2008}
M.~A. Piette, S.~Kiliccote, and G.~Ghatikar.
\newblock Design and imple- mentation of an open, interoperable automated
  demand response infrastructure.
\newblock Technical report, Lawrence Berkeley National Laboratory, 2008.

\bibitem{Rasmussen2012}
B.~P. Rasmussen.
\newblock Dynamic modeling for vapor compression systems--part i: Literature
  review.
\newblock {\em HVAC\&R Research}, 18(5):934--955, 2012.

\bibitem{Rasmussen2012b}
B.~P. Rasmussen.
\newblock Dynamic modeling for vapor compression systems--part ii: Simulation
  tutorial.
\newblock {\em HVAC\&R Research}, 18(5):956--973, 2012.

\bibitem{Rasmussen2006}
B.~P. Rasmussen and A.~G. Alleyne.
\newblock Dynamic modeling and advanced control of air conditioning and
  refrigeration systems.
\newblock Technical report, Air Conditioning and Refrigeration Center,
  University of Illinois at Urbana-Champaign, 2006.

\bibitem{Sarabia2009}
D.~Sarabia, F.~Capraro, L.~F.~S. Larsen, and C.~{de Prada}.
\newblock Hybrid {NMPC} of supermarket display cases.
\newblock {\em Control Engineering Practice}, 17:428--441, 2009.

\bibitem{Shafiei2013}
S.~E. Shafiei, R.~Izadi-Zamanabadi, H.~Rasmussen, and J.~Stoustrup.
\newblock A decentralized control method for direct smart grid control of
  refrigeration systems.
\newblock In {\em Proceedings of the 52rd IEEE Conference on Decision and
  Control}, 2013.

\bibitem{Shafiei2013b}
S.~E. Shafiei, H.~Rasmussen, and J.~Stoustrup.
\newblock Modeling supermarket refrigeration systems for demand-side
  management.
\newblock {\em Energies}, 6(2):900--920, 2013.

\bibitem{Sioshansi2009}
R.~Sioshansi and W.~Short.
\newblock Evaluating the impacts of real time pricing on the usage of wind
  power generation.
\newblock {\em IEEE Transaction on Power Systems}, 24(2):516--524, 2009.

\bibitem{Sonntag2008}
C.~Sonntag, A.~Devanathan, and S.~Engell.
\newblock Hybrid {NMPC} of a supermarket refrigeration system using sequential
  optimization.
\newblock In {\em Proceedings of 17th IFAC World Congress}, 2008.

\bibitem{DOE2012}
{U.S. Department of Energy}.
\newblock Commercial real estate energy alliance.
\newblock 2012.

\bibitem{Vinther2015}
K.~Vinther, H.~Rasmussen, J.~Stoustrup, and A.~G. Alleyne.
\newblock Learning-based precool algorithms for exploiting foodstuff as thermal
  energy reserve.
\newblock {\em IEEE Transactions on Control Systems Technology},
  23(2):557--569, 2015.

\bibitem{Yang2015}
I.~Yang.
\newblock {\em Risk Management and Combinatorial Optimization for Large-Scale
  Demand Response and Renewable Energy Integration}.
\newblock PhD thesis, University of California, Berkeley, 2015.

\bibitem{Yang2016}
I.~Yang, S.~A. Burden, R.~Rajagopal, S.~S. Sastry, and C.~J. Tomlin.
\newblock Approximation algorithms for optimization of combinatorial dynamical
  systems.
\newblock {\em IEEE Transactions on Automatic Control}, 2016.

\bibitem{Yang2015acc}
I.~Yang, D.~S. Callaway, and C.~J. Tomlin.
\newblock Indirect load control for electricity market risk management via
  risk-limiting dynamic contracts.
\newblock In {\em Proceedings of 2015 American Control Conference}, pages
  3025--3031, 2015.

\bibitem{Yang2011}
Z.~Yang, K.~B. Rasmussen, A.~T. Kieu, and R.~Izadi-Zamanabadi.
\newblock Fault detection and isolation for a supermarket refrigeration
  system-part one: {Kalman}-filter-based methods.
\newblock In {\em Proceedings of 18th IFAC World Congress}, 2011.

\end{thebibliography}
\vfill\eject

\end{document}